\def\eps{\varepsilon}
\newcommand{\ER}{Erd\"os R\'enyi}
\DeclareMathOperator{\Bin}{Bin}
\DeclarePairedDelimiter\floor{\lfloor}{\rfloor}
\def\prob#1{\mathbb{P}\{#1\}}
\def\Prob#1{\mathbb{P}\{#1\}}
\def\truth{\mbox{Truth}}
\DeclareMathOperator{\Var}{Var}
\newtheorem{theorem}{Theorem}
\newtheorem{lemma}{Lemma}
\newtheorem{proposition}{Proposition}
\title{Enabling Asymptotic Truth Learning in a Social Network}
\author{Kevin Lu\thanks{Department of Mathematics, Rutgers University, Email: \texttt{kll160@rutgers.edu}}
\and Jordan Chong\thanks{New York University. Email: \texttt{jhc10007@nyu.edu}} \and
Matt Lu
\thanks{Northwestern University. Email: \texttt{mattlu2029@u.northwestern.edu}}
\and 
Jie Gao
\thanks{Department of Computer Science, Rutgers University, Email: \texttt{jg1555@rutgers.edu}.
Kevin Lu and Gao are supported by NSF IIS-2229876, DMS-2220271, DMS-2311064, CCF-2208663,  CCF-2118953. Part of this work was carried out while JC and ML were participants in the DIMACS REU program at Rutgers University, supported by the NSF award CNS-2150186 and the NSF HDR TRIPODS award CCF-1934924.}}
\begin{document}
\maketitle

\begin{abstract}
Consider a network of agents that all want to guess the correct value of some ground truth state. In a sequential order, each agent makes its decision using a single private signal which has a constant probability of error, as well as observations of actions from its network neighbors earlier in the order. We are interested in enabling \emph{network-wide asymptotic truth learning} -- that in a network of $n$ agents, almost all agents make a correct prediction with probability approaching one as $n$ goes to infinity. In this paper we study both random orderings and  carefully crafted decision orders with respect to the graph topology as well as sufficient or necessary conditions for a graph to support such a good ordering.
We first show that on a sparse graph of average constant degree with a random ordering asymptotic truth learning does not happen. We then show a rather modest sufficient condition to enable asymptotic truth learning. With the help of this condition we characterize graphs generated from the Erd\"os R\'enyi model and preferential attachment model. In an Erd\"os R\'enyi graph, unless the graph is super sparse (with $O(n)$ edges) or super dense (nearly a complete graph), there exists a decision ordering that supports asymptotic truth learning. Similarly, any preferential attachment network with a constant number of edges per node can achieve asymptotic truth learning under a carefully designed ordering but not under either a random ordering nor the arrival order. We also evaluated a variant of the decision ordering on different network topologies and demonstrated clear effectiveness in improving truth learning over random orderings. 
\end{abstract}

\section{Introduction}

Decision making in a social setting is a natural problem that appears in many real world multi-agent settings.
In a security application scenario we have a network of agents making predictions on the state of the network --- is the network under attack and if so which type of attack? All agents can collect local data and statistics and use it to make inferences. In addition, agents can also observe the behaviors, announcements, predictions from neighboring agents in the network. Depending on the particular application scenario, an agent may be able to share with neighboring agents some raw data or even prediction models. In other settings, the agents may belong to different entities. Due to privacy reasons they may not be able to share any data or model beyond their prediction of the network state – a scenario that is considered here. 

Suppose there is an unknown ground truth state $\theta$ and a network of agents that all want to guess the correct value of $\theta$. Each agent can make a single private measurement which has a constant probability to error. 
In addition, an agent can observe the actions/predictions of other agents, who have already made their decisions.
Intuitively an agent can make use of her own private measurements as well as the actions of other agents to improve her chance of a correct prediction. 
This is the setup in the classical sequential learning~\cite{Golub2017-qo,Mobius2014-oy}. The agents make decisions in a given sequential order $\sigma$. The $i$-th agent has the knowledge of the decisions of all the previous agents. The central question asks if the $n$-th agent in the sequence chooses the correct action with probability approaching $1$ as $n$ goes to infinity, phrased as \emph{asymptotic learning}.

Asymptotic learning does not happen in a sequential learning setting -- due to a well known inherent trap called herding or information cascade. An early (small) group of agents make a wrong decision due to observation errors and misfortune, subsequent agents observing the decisions of the early agents ignore their private signal or measurements, in favor of conforming to the actions of the agents they observe~\cite{Banerjee1992-ra,Bikhchandani1992-rs,Welch1992-yt,Smith2000-wk,Chamley2004-or}. This can happen with perfectly rational agents and rigorous Bayesian analysis. When this happens, all agents make the wrong prediction despite an abundant number of independent observations and signals in the network. In reality, we indeed observe herding when a population makes seemingly irrational decisions following the crowd. 

In this paper we consider sequential learning on a social network, i.e., the model used in~\cite{Bahar2020-am}. The agents still make decisions in a sequential order but an agent $v$ can only see the actions of $v$'s neighbors who have arrived earlier. In this case, the network topology limits the visibility (or information flow). Thus both the network topology and the agent ordering are crucial for successful learning. Further, we are interested in enabling \emph{network-wide asymptotic truth learning} -- that in a network of $n$ agents, the average probability of all agents making a correct prediction approaches one as $n$ goes to infinity. Notice that this is a much stronger requirement than asymptotic learning as we need not just the $n$th agent, rather, all but $o(n)$ agents achieve asymptotic learning. 

Enabling network-wide asymptotic learning is much less studied in the literature. One of the main results in~\cite{Bahar2020-am} is to present one family of graphs that supports network wide asymptotic learning, namely, 
the \emph{celebrity} graphs: fully connected bipartite graphs of $m$ celebrity nodes and $n$ normal nodes. Under a random ordering of agents on this network or under an adversarial ordering (but the network topology is randomized), there will be roughly $n/m$ normal nodes that arrive before any celebrity nodes, termed the `guinea pigs'. These early normal nodes are independent of each other and thus make independent decisions. Now, the first celebrity node is able to effectively aggregate information from these independent decisions and arrive at a much more accurate decision. With the Bayesian model, all remaining normal nodes can now effectively learn from this `knowledgeable' celebrity node and thus essentially copy its decision. This leads to highly accurate predictions for all remaining vertices. In this celebrity graph, in order for effective aggregation to happen with high probability, we need $n/m$ to be at least $\Omega(\log n)$ at the first celebrity node. We also need $n/m$ to be sublinear $o(n)$ -- otherwise the network learning rate is dragged down by the guinea pigs. This means that the network is actually very dense, with average degree polynomial in the network size. 

In this paper we are interested in the interplay of network topology and the arrival ordering to enable asymptotic truth learning.
We would investigate sufficient and necessary conditions on whether there is a good ordering and if so present one, and what happens under a random ordering. Our study will also provide insights or guidance on improving quality of learning in real-world decision making processes. 


\subsection{Our Contribution}

Suppose there is a ground truth state $\theta\in \{0, 1\}$ and all agents aim to learn $\theta$. Following a decision order $\sigma$, a node $v$ makes a single private observation which matches $\theta$ with a constant probability $q$. $v$ is also able to observe the decisions of neighbors that are earlier than $v$ in order $\sigma$. 
To make a decision, node $v$ can use the Bayesian model, with all information available so far including the decisions from neighbors, 
the network topology and the ordering $\sigma$. Computing the Bayesian state can be intensive, considering the exponentially growing state space. Alternatively, $v$ can use a simple model such as majority rule on the decisions from neighbors and its private signal.  Our goal is to achieve network wide asymptotic truth learning -- the average success probability of all agents approaches $1$ when the number of agents goes to infinity.



To support asymptotic truth learning, both the network topology and the ordering $\sigma$  matter. If the network is empty, clearly no asymptotic learning is possible -- every agent makes decisions using only local private measurements and thus has a constant error probability. If the network is a complete graph, herding happens with a constant probability. Again we do not have asymptotic truth learning. In the two examples, the ordering $\sigma$ does not matter but in other graphs,  the ordering in which the agents make decisions is crucial. The success of truth learning depends on the combination of a carefully crafted ordering tailored to the network topology, under a proper inference model. 


\smallskip\noindent\textbf{(i) Sparse graph with random ordering: no truth learning} We first start with negative results -- when truth learning does not happen. 
First, when the network has an independent set of size $\Omega(n)$, there is an ordering that does not support truth learning, under both the Bayesian model and the majority rule model. Specifically,  if these independent nodes are put first in the ordering, they have to make independent decisions, using only private observations, leading to a learning rate of $q$ per node. Since this is a constant fraction of the entire network size, the network learning rate is thus bounded away from $1$ by a constant. Using this simple observation we can show that for any network of constant \emph{average} degree, there is an ordering for which truth learning does not happen. In fact, a random ordering does not support truth learning.

\smallskip\noindent\textbf{(ii) A sufficient condition for truth learning in Bayesian model}: Our positive result for enabling truth learning starts from a sufficient condition that concerns both the network structure and the decision ordering. Specifically if there is a vertex $v$ with a set of neighbours $S$, $|S|=\omega(1)$, that are independent of each other such that removing $S$ will disconnect at most $o(n)$ vertices from $v$,  then 
there is a decision ordering such that asymptotic truth learning is achieved under the Bayesian model. 
The decision ordering will put $S$ first before $v$ and then select a decision ordering that propagate the aggregated signal at $v$ to the rest of the vertices. 
The agents in $S$ is analogous to the notion of  \textit{guinea pigs}, agents who make decisions solely based on their private signals~\cite{Smith1991-sy,Sgroi2002-rz}. We show that a slightly super-constant number of such guinea pigs in the neighborhood of agent $v$, allows $v$ to learn a high quality signal -- a signal which matches $
\theta$ with probability approaching $1$, that can be effectively propagated to the entire network as long as connectivity to $v$ is maintained. 

\smallskip\noindent\textbf{(iii) Characterization of {\ER} model and PA model.} Using this sufficient condition we provide characterization of asymptotic truth learning in the {\ER} model -- the classical random graph model with each pair connected by the edge independently decided with probability $p$. When $p$ is too small, $p=O(1/n)$, the graph is too sparse and has a constant fraction of isolated nodes. Thus there is no decision ordering that supports asymptotic truth learning. When $p$ is a constant the graph has quadratically many edges such that any random ordering does not support truth learning since herding happens. Further, when $p$ is almost $1$, with a gap of only $O(n^{\epsilon})/n$ for any $0<\epsilon<1$, one cannot even find an ordering tailored to the specific graph topology to prevent herding. 
For any other value $p$ we show that the sufficient condition is met and therefore a truth learning ordering exists.

We also analyze networks obtained from preferential attachment models. Notice that in the preferential attachment (PA) network each newcomer selects $k$ edges, typically with $k$ being a constant. Thus the network has constant average degree. A random ordering does not support truth learning despite the existence of high degree vertices with the potential power of effectively aggregating information. Further, if we follow the natural arrival order in the PA model, each node knows $k$ vertices earlier in the order. When $k\geq 2$, herding happens with a constant probability and again there is no truth learning. 
By using the sufficient condition and analysis of degree distribution of vertices in a PA model, we show there is indeed a decision ordering in the PA graph that supports asymptotic truth learning. In fact, the preferential attachment network has been considered as one of the `good' networks in the literature that supports truth learning \cite{Bahar2020-am} --- and now we know it must be accompanied by a carefully selected decision ordering. 

\smallskip\noindent\textbf{(iv) Enabling truth learning on sparse graphs.} The sufficient condition requires at least one node with degree $\omega(1)$. On a network of constant degree everywhere, a random ordering does not work. 
Motivated by the negative results, we ask if there can be networks of constant degree that \emph{does} support truth learning with a particular vertex ordering. We first show that on a 2D grid graph we can carefully choose a subtree of size $O(\log n)$ such that the root has a high quality signal. Then we can propagate it to the rest of the network.

Further, we show that the butterfly network with a specific ordering of the vertices, does provide network-wide truth learning. First, we consider a fully balanced binary tree with an ordering bottom up along the tree. We can calculate the success rate of each node in the tree, which is monotonically improved up the tree and approaches $1$ at the root. The butterfly network is essentially packing $n$ fully balanced binary tree with shared vertices at the bottom layers. Again with a bottom up ordering, the learning rate improves along each layer. The network wide learning rate is thus also approaching $1$ as the network size goes to infinity. 
This is the first example of a constant degree network that supports truth learning without using the sufficient condition. 

\smallskip\noindent\textbf{(iv) Simulation results with majority rule on real-world networks.} 
We report experiments using different network topologies including real world graphs and different decision orderings. Since it is computationally expensive to run a full Bayesian model we used the majority rule in all our experiments. We designed a decision ordering inspired by the sufficient condition, which effectively improves truth learning quality over random decision orders. 

\subsection{Related Work}

Learning in a social network is a well studied topic~\cite{Golub2017-qo,Mobius2014-oy}, so we only review work that is most relevant to this study, namely sequential learning with a social network structure. Almost all prior work examine only asymptotic learning, i.e., the $n$-th agent has a success rate approaching $1$ as $n\to \infty$. 
Asymptotic learning occurs
when the private signals are \emph{unbounded} -- when the success probability of a private signal can be arbitrarily close to 1.
When the private signals are bounded (which is the parameter regime studied in this paper), asymptotic learning does not happen due to information cascade~\cite{Smith2000-wk}.

To avoid information cascade one suggestion (\cite{Smith1991-sy,Sgroi2002-rz}) is to restrict social interaction among the agents. In particular the agents early in the sequence, referred to as guinea pigs, are not allowed to observe each other. The late decision makers therefore can infer a great deal from the actions of the guinea pigs. \cite{Sgroi2002-rz} also formulated an optimization problem using a payoff model to find the ideal number of guinea pigs. In this paper, the social network structure is fixed and we are looking for a good decision ordering, if it exists. \cite{Acemoglu2011-vj} also considers a fixed sequence of $n$ agents, but assume that edges in the social network follow a stochastic model. They make a distinction between expanding and non-expanding observations. A social network featuring non-expanding observations describes a topology where the decisions of an infinite number of agents are based on a \emph{finite} set of \textit{excessively influential} agents. Such a network precludes asymptotic learning -- as there is a positive probability that the excessively influential agents made the wrong decision which was copied by the agents later in the sequence. In the positive direction, they show that unbounded private signals and expanding observations are sufficient for asymptotic learning. 
A number of other work use a similar setup with~\cite{Acemoglu2011-vj}.
\cite{Monzon2014-rl} considered position uncertainty and learning robustness. 
\cite{Arieli2018-ap} generalized a linear ordering to an $m$-dimensional lattice. 


\cite{arieli2020social} uses the same model as in~\cite{Bahar2020-am} and provides interesting insights on bounded degree graphs. They pointed out that with a random decision order on a bounded degree graph, the decision process enjoys locality -- the decision of an agent is only influenced by other agents within a certain distance. Further, they also investigated \emph{local learning requirement} that supports the successful learning of a particular agent, without high degree requirements for aggregation. Notice that this result again does not give network-scale asymptotic truth learning.

\section{Models and Problem Definition}
Consider a network as an undirected graph $G=(V,E)$ modeling the social network of $|V| = n$ agents. 
Let $\theta$ be the ground truth signal which is a Bernoulli variable with parameter $\frac{1}{2}$ and is unknown to the agents. 
Each agent makes a private noisy measurement of $\theta$ and, together with observations of the predictions made by neighbours in $G$, try to guess the ground truth. 
Specifically, each agent $v_i$ observes $\theta$ privately with a success probability $q \in (\frac{1}{2}, 1)$, which is a constant away from $1$. We represent the measurement by a Bernoulli random variable $s_i$. $\Prob{s_i=\theta}=q.$
In the literature this is called the \emph{bounded belief} model~\cite{Smith2000-wk,Acemoglu2011-vj}. For simplicity we use the same probability $q$ for every agent. The results can be extended to the case that every agent has an error probability within $(\alpha, \beta)$ with $1/2<\alpha<\beta<1$.

Next, let $\sigma$ be a permutation of $V$, which we call the \emph{decision order}. In the order of $\sigma$, we let each $v_i \in V$ decide on an action, $a_i \in \{0, 1\}$, as a prediction of $\theta$. To simplify notation, we will assume that $v_i$ is the $i$th node in the order $\sigma$, i.e., the decision order is $v_1, v_2, ... v_n$. If two vertices $v_i, v_j$ have an edge in $G$, we write $v_j\sim v_i$. The set of neighbors of $v$ in $G$ is denoted by $N(v)$.
With the decision order $\sigma$ we define the neighbors of a node $v_i$ under $\sigma$ as the set of neighboring nodes that arrive before $i$, i.e., $N_{\sigma}(v_i)=\{v_j|v_j\sim v_i, j<i  \}$. 
When it is $v_i$'s turn to take an action, $v_i$ can use information from its own signal $s_i$ as well as the published actions from nodes in $N_{\sigma}(v_i)$. 

Here are two models that consider different information each agent uses to decide on its action.
\begin{itemize}
    \item \emph{Global model/Bayesian model:} Every agent $v_i$ has the knowledge of the global network topology, the decision ordering $\sigma$ of the vertices, its own private signal $s_i$, as well as the actions taken by its neighbors $N_{\sigma}(v_i)$ that arrive before $v_i$. With knowledge of the network topology, a node uses a Bayesian model to fully utilize the actions of its neighbors in order to find the optimal action. This model is accurate in describing the learning process in the network setting. However, it is also computationally expensive and not practical to implement.

    \item \emph{Local model/majority rule model:} Here, every agent $v_i$ has only knowledge of its private signal $s_i$ and the actions of its neighbors $N_{\sigma}(v_i)$. Without any other knowledge, the agent would just match its action to the majority action of its neighbors $N_{\sigma}(v_i)$ and its own private signal. This model is more suitable for a distributed setting, computationally much more friendly and has been adopted in a number of prior work~\cite{Bahar2020-am,Shoham1992-ir,Laland2004-ej}. 
\end{itemize}

We now define a learning network $(G, q, \sigma)$ with a graph $G$, the observation success rate $q$ and a vertex decision order $\sigma$.
The learning quality for an agent $v_i$ in the learning network $(G, q, \sigma)$ is the probability that its prediction is correct: $\ell(v_i)= \mathbb{P}(a_i=\theta)$. The 
\emph{learning quality} for the entire network $G$ is the ratio of the expected number of agents that learn successfully in the network following the decision order $\sigma$: $$L_{\sigma}(G) = \frac{1}{|V|}E|\{v_i\in V: a_i=\theta\}|=\frac{1}{|V|}\sum_{v_i\in V}\ell(v_i).$$ These definitions follow from the setup in~\cite{Bahar2020-am} and~\cite{arieli2020social}. With these definitions, we can define two new objectives.
Let $\mathcal{F}$ be a family of graphs such that for any $N \in \mathbb{N}$, there exists $G \in \mathcal{F}$ with size at least $N$. We say that $\mathcal{F}$ achieves \emph{network-wide asymptotic truth learning} if as $n\to \infty$, $\inf \{L_{\sigma}(G): |G| \geq n, G \in F\}\rightarrow 1$ with each $L_\sigma(G)$ based on the optimal $\sigma$ for that $G$. We say that $\mathcal{F}$ achieves \emph{network-wide asymptotic random truth learning} if in addition
$\sigma$ is a random permutation on $G$. In future discussions, we can shorten to asymptotic (truth) learning and asymptotic random (truth) learning for brevity. Clearly asymptotic random learning implies asymptotic learning. The goal of this work is to examine both of these objectives. 



\section{Challenges of Asymptotic Truth Learning}\label{sec:challenges}

Before analyzing specific graph models and decision orders that enable truth learning, we first look at topologies and orderings under which truth learning \emph{does not} happen. We start with two cases that are well-known in the literature.
\begin{itemize}
    \item \textbf{Independent learning}: The simplest example in which truth learning cannot be obtained is an empty graph. Each node makes decisions purely from its own noisy observation $s_v$, regardless of the decision order. To maximize the chance of guessing right, by Bayesian rule, the best strategy for an agent $v$ is to report the same value as its private measurement $s_v$, i.e., $a_v=s_v$. The learning quality is $\ell(v) =\prob{s_v=\theta}=q$. The learning quality of the entire network is $q$ as well. Notice that this is not asymptotic truth learning as $L(G)$ is always a constant away from $1$.

    \item \textbf{Information cascade}: When $G$ is a complete graph, each node $v$ knows its own noisy observation $s_v$ as well as all the actions from nodes earlier than $v$ in the order $\sigma$. This is exactly the sequential learning scenario in the literature, regardless of the ordering of nodes. It is well known that information cascade can easily happen in this case~\cite{Bikhchandani1992-rs,Banerjee1992-ra}. In particular, if the first two agents take incorrect actions $a=1-\theta$, then the third agent (and all agents after that) will ignore its own signal and take $a$ as well. This is because the chance that both agents before $v$ are wrong, essentially $(1-q)^2$, is lower than the chance that $s_v$ is wrong ($1-q$). Therefore with probability at least $(1-q)^2$ all agents take the wrong action. This means that the learning quality of the entire network is at most $1-(1-q)^2$, which is again away from $1$ by a constant. 
    Thus asymptotic truth learning does not occur. 
    Notice that the same situation happens if we use the Majority Rule model.
\end{itemize} 

These two examples represent two challenges that forbid asymptotic truth learning. If the graph is too sparse, too many nodes will have to use independent learning, which by itself would already bring down the network learning quality. On the other hand if the graph is too dense such that every node is under heavy influence of the other vertices that arrive before, unfortunate mistakes in observation signals for the first few vertices can propagate throughout the network. This again leads to a large fraction of nodes that end up with the wrong prediction.

\subsection{Graphs of $O(1)$ average degree} 
If a graph is sparse, i.e., has constant average degree, in a random ordering $\sigma$,  a constant fraction of nodes  are independent, i.e., with no neighbors that arrive earlier in the ordering. Therefore they will derive their actions purely from individual observation. Since independent nodes have learning quality to be at least a constant away from $1$, random asymptotic learning does not happen.

\begin{restatable}{theorem}{SparseGraphNoLearning}\label{thm:constant-degree}
    Any family of graphs of constant average degree $\Delta=O(1)$ does not achieve asymptotic random truth learning, under both the Bayesian model and the majority vote model. 
\end{restatable}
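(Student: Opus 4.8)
The plan is to show that under a uniformly random decision order, a constant fraction of the vertices end up with \emph{no} neighbor preceding them, so that each such vertex must decide from its private signal alone and contributes only $q<1$ to the learning quality; averaging this per-vertex deficit over the whole network then keeps $\mathbb{E}_\sigma[L_\sigma(G)]$ bounded away from $1$. First I would call a vertex $v$ \emph{order-isolated} if it precedes all of its graph-neighbors, i.e. $N_\sigma(v)=\emptyset$. For a uniformly random permutation $\sigma$, the event that $v$ comes first among the $d_v+1$ vertices of its closed neighborhood $\{v\}\cup N(v)$ has probability exactly $1/(d_v+1)$, where $d_v=|N(v)|$. Hence the expected number of order-isolated vertices equals $\sum_{v\in V} 1/(d_v+1)$.

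The key quantitative step is to lower-bound this sum by $\Omega(n)$. Since $x\mapsto 1/(x+1)$ is convex on $x>-1$, Jensen's inequality gives
$$\sum_{v\in V}\frac{1}{d_v+1}\;\ge\; n\cdot\frac{1}{\bar d+1}\;=\;\frac{n}{\Delta+1},$$
where $\bar d=\Delta=O(1)$ is the average degree. This is the heart of the argument: even if a handful of vertices carry very large degree, convexity guarantees that the remaining low-degree vertices still force a \emph{linear} number of order-isolated vertices in expectation.

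Next I would observe that each order-isolated vertex has learning quality exactly $q$ in \emph{both} models. Such a vertex observes no neighbor action, so its only information about $\theta$ is its private signal $s_v$; under the uniform prior the Bayes-optimal action is $a_v=s_v$, and the majority rule trivially outputs $s_v$ as well, giving $\ell_\sigma(v)=q$ either way. Writing $p_v=\mathbb{P}_\sigma[v\text{ order-isolated}]=1/(d_v+1)$ and using the trivial bound $\ell_\sigma(v)\le 1$ for the remaining vertices, I decompose $\mathbb{E}_\sigma[\ell_\sigma(v)]\le p_v\cdot q+(1-p_v)\cdot 1 = 1-(1-q)\,p_v$. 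Averaging over $v$ and inserting the Jensen bound yields
$$\mathbb{E}_\sigma[L_\sigma(G)]\;=\;\frac1n\sum_{v\in V}\mathbb{E}_\sigma[\ell_\sigma(v)]\;\le\; 1-\frac{1-q}{n}\sum_{v\in V}\frac{1}{d_v+1}\;\le\; 1-\frac{1-q}{\Delta+1}.$$
Since $q$ is a fixed constant in $(\tfrac12,1)$ and $\Delta=O(1)$, the right-hand side is a constant strictly below $1$, so the family cannot achieve asymptotic random truth learning.

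The only delicate point is the convexity step: one must rule out that a heavy-tailed degree sequence (a few hubs absorbing most of the edges) could destroy the linear count. Jensen handles this cleanly precisely because it consumes only the \emph{average} degree and assumes no regularity of the degree distribution. The remaining ingredients---the $1/(d_v+1)$ first-position probability and the optimality of $a_v=s_v$ for a vertex with no observed neighbors---are routine, and the same bound applies verbatim to the majority-vote model since an order-isolated vertex behaves identically under both rules.
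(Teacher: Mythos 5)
Your proposal is correct and follows essentially the same argument as the paper: identify the vertices that precede all their neighbors (probability $1/(d_v+1)$ each), note they learn with probability exactly $q$, and conclude the expected network learning quality is at most $1-\Omega(1-q)$. The only difference is cosmetic: you lower-bound $\sum_v 1/(d_v+1)$ by $n/(\Delta+1)$ via Jensen's inequality, while the paper counts the at least $n/2$ vertices of degree at most $2\Delta$ to get the slightly weaker constant $\frac{1-q}{2(2\Delta+1)}$; both are valid.
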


The proof is in \Cref{sec:appendix-challenge}. This theorem implies that there is no random asymptotic learning for any constant dimensional grid graphs, constant degree expander graphs, and preferential attachment graph with constant average number of edges per node.

In contrast, \cite{Bahar2020-am} proposed the \emph{celebrity graph} which is a complete bipartite graph with vertex set $X\cup Y$, $|X|=n$, $|Y|=m$. In order to have network learning rate $1-\eps$ under a random decision order, $N=O(\frac{1}{\eps}\ln\frac{1}{\eps})$ and $M=O(\frac{1}{\eps^2}\ln\frac{1}{\eps})$. Therefore to get $|N|+|M|=n$, $\eps$ is in the order of $1/\sqrt{n}$, which means that the network average degree is roughly $\Omega(\sqrt{n})$, in order to support random asymptotic learning.

On the other hand, even for a graph with constant degree everywhere, if we carefully design a decision order it is still possible to achieve asymptotic truth learning. We show such an example called a butterfly network. A \textit{binary butterfly network} consist of $n=(k+1)2^k$ nodes, where $k+1$ is the depth, or \textit{rank}, of the network. Let $(i,j)$ refer to the $j$th node in the $i$th rank. For $i>0$, the node $(i,j)$ forms an edge with $2$ nodes in depth $i-1$: $(i-1, j)$ and $(i-1, m)$, where $m \in \mathbb{Z}$ is found by inverting the $i$ most significant bit of $j$.  Note that due to the  recursive structures, a butterfly network can be interpreted as the integration of $2^k$ complete binary trees.  See Figure~\ref{fig:butterfly} for an example. We highlight one binary tree with the root being the top leftmost vertex. We define a bottom-up ordering $\sigma$ such that vertices of depth $i$ arrive after vertices of depths $i - 1$. The proof is in \Cref{sec:appendix-challenge}.
 
\begin{figure}[htbp]
\centering
\includegraphics[width=0.55\linewidth]{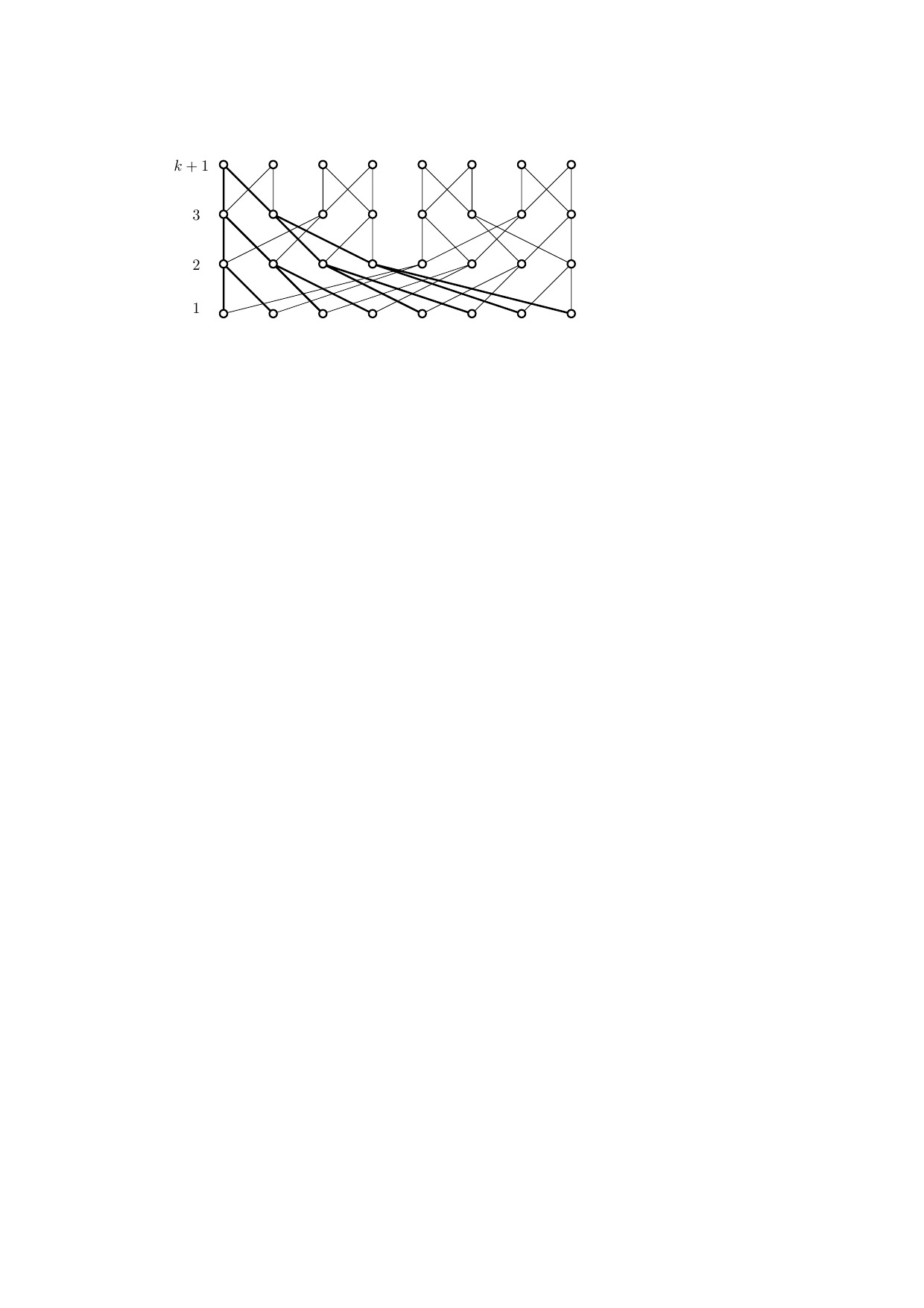}
\caption{An example of a butterfly network of $k=3$.}
\label{fig:butterfly}
\vspace*{-4mm}
\end{figure}

\begin{restatable}{theorem}{ButterFly}
    A class of binary butterfly networks under the bottom-up ordering achieves asymptotic truth learning, under both the Bayesian model and the majority vote model.
\end{restatable}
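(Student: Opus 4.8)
The plan is to collapse the whole butterfly to a single clean recursion on a complete binary tree, using the fact that under the bottom-up order every vertex sees only its two neighbors one rank below. First I would record the decisive combinatorial property: tracing the downward edges from any vertex $(i,j)$ produces a genuine complete binary tree of depth $i$ with no vertex revisited, because each successive rank inverts a distinct bit of the index (the stated bit-flip rule $m=j\oplus 2^{k-i}$ sends the $2^i$ downward root-to-leaf paths to $2^i$ distinct rank-$0$ vertices). Consequently the two lower neighbors $(i-1,j)$ and $(i-1,m)$ of $(i,j)$ have vertex-disjoint cones toward rank $0$. Since under the bottom-up order every later neighbor is invisible, the action of any vertex is a deterministic function of the private signals in its cone alone; hence, conditioned on $\theta$, the actions $a_{(i-1,j)}$ and $a_{(i-1,m)}$ are independent of each other and of the fresh private signal $s_{(i,j)}$. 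Rank-wise symmetry of the butterfly (the intra-rank order is irrelevant, as each rank-$i$ vertex observes all of rank $i-1$) together with the $0\leftrightarrow 1$ flip symmetry of the model lets me assign every rank-$i$ vertex a common success probability $p_i$, with each observed neighbor action a symmetric channel correct with probability $p_{i-1}$.

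Next, for the majority-rule model I would derive the one-step recursion. A rank-$i$ vertex takes the majority of three conditionally independent bits correct with probabilities $q,p_{i-1},p_{i-1}$, so
\[
p_i = f(p_{i-1}), \qquad f(p) = p^2 + 2qp(1-p),
\]
with $p_0 = q$. A short computation gives $f'(p)=2p+2q-4qp>0$ on $[0,1]$ (using $q>\tfrac12$), and $f(p)-p = p(2q-1)(1-p)>0$ for $p\in(0,1)$, whose only roots are $0$ and $1$. Hence $(p_i)$ is increasing, bounded by $1$, and converges to the attracting fixed point $1$: the per-rank success probability climbs monotonically to $1$ up the tree.

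To pass from per-rank to network-wide quality, note that all $k+1$ ranks contain the same number $2^k$ of vertices, so $L_\sigma(G)=\frac{1}{k+1}\sum_{i=0}^{k}p_i$. Since $p_i\to 1$, its Cesàro average tends to $1$ as $k\to\infty$ (equivalently $n\to\infty$), giving asymptotic truth learning under majority rule. For the Bayesian model I would show by induction that its rank-$i$ success probability $p_i^{B}$ satisfies $p_i^{B}\ge p_i$: both models observe exactly the same data (own signal and the two lower-neighbor actions), and the optimal Bayesian rule dominates the particular majority rule applied to the same conditionally independent inputs of reliability $p_{i-1}^{B}\ge p_{i-1}$, so $p_i^{B}\ge f(p_{i-1}^{B})\ge f(p_{i-1})=p_i$ by monotonicity of $f$. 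Thus $p_i^{B}\ge p_i\to 1$, and the Bayesian network quality, squeezed between the majority quality and $1$, also tends to $1$.

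The main obstacle is the first step: proving that the cone of a single butterfly vertex is a true tree, so that sibling cones are disjoint and the two observed actions are genuinely conditionally independent given $\theta$. Without this the neighbor votes would be correlated and the three-signal recursion would break; I would verify it directly from the bit-flip indexing by checking that no two downward paths reconverge. A secondary technical point is making the Bayesian induction precise, namely that a rank-$i$ Bayesian vertex may legitimately treat its two neighbor actions as independent symmetric bits of known reliability $p_{i-1}^B$, which again rests entirely on the disjoint-cone property.
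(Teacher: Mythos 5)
Your proposal is correct and follows essentially the same route as the paper: the identical rank-wise recursion $p_i = p_{i-1}^2 + 2qp_{i-1}(1-p_{i-1})$ on the embedded complete binary trees (justified by the disjoint-cone independence, which the paper asserts by design and you rightly flag as the key fact to verify), followed by averaging over the $k+1$ equal-sized ranks. The only differences are in execution, and both are fine: you conclude $p_i\to 1$ via monotonicity and the unique attracting fixed point plus a Ces\`aro average, where the paper extracts the explicit geometric contraction $\epsilon_i < \left[(2q+1)(1-q)\right]^{i-1}(1-q)$ and sums the series; and you handle the Bayesian model by an optimality-domination induction, where the paper simply observes that with two equally reliable lower neighbors the Bayesian rule coincides with the three-way majority.
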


\section{Enabling Truth learning under Bayesian model}\label{sec:sufficient}


We present the main results in this section. We start with a sufficient condition for asymptotic truth learning under the Bayesian model. Then we use this to analyze {\ER} Graphs, preferential attachment graphs and grid graphs. 

\begin{theorem}\label{theorem:ordering_exists}
Let $\mathcal{F}$ be a family of graphs. For each $G = (V, E) \in \mathcal{F}$ with $|V| = n$, if the following holds for some $v\in V$, $S \subset V$:
\begin{enumerate}
    \item $S \subset N(v)$ such that $S$ is independent ie. $s_1s_2 \notin E$ for all $s_1, s_2 \in S$;
    \item $|S|=\omega(1)$, $|S| = o(n)$;
    \item $G - S$ is a graph where the component containing $v$ has all but $o(n)$ of the remaining vertices,
\end{enumerate}
then there is an ordering $\sigma$ under which $\mathcal{F}$ achieves asymptotic truth learning.
\end{theorem}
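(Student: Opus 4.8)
The plan is to exhibit an explicit decision order $\sigma$ built around the distinguished vertex $v$ and its independent neighborhood $S$, and then to show that the high-quality signal aggregated at $v$ propagates to almost the whole graph. First I would place all of $S$ at the very front of $\sigma$ (in arbitrary internal order), then $v$, then the remaining vertices of the $v$-component of $G - S$ in breadth-first order rooted at $v$, and finally the $o(n)$ vertices that $S$ disconnects from $v$. Because $S$ is independent (Condition 1), when any $s \in S$ acts it has no earlier neighbor, so $N_\sigma(s) = \emptyset$ and its optimal action is its own private signal; hence the actions $\{a_s : s \in S\}$ are independent and each equals $\theta$ with probability $q$.

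Next I would analyze $v$. Since $S \subset N(v)$ and $S$ occupies the entire prefix before $v$, we have $N_\sigma(v) = S$, so $v$ observes exactly the $|S|$ independent copies of the signal. The key fact I would use is that the Bayesian (MAP) action is optimal among all decision rules that are measurable functions of the observed information: for any rule $f$, $\Prob{f = \theta} \le \Prob{a_v^{\mathrm{Bayes}} = \theta}$. Applying this with $f$ equal to the majority vote of $\{a_s\}$ gives $\ell(v) \ge \Prob{\mathrm{maj}_{s\in S}(a_s) = \theta}$. A Chernoff/Hoeffding bound on $\sum_{s \in S}\mathbf{1}[a_s = \theta]$, whose mean is $q|S| > |S|/2$, shows this majority is correct except with probability at most $e^{-2(q-1/2)^2|S|}$, which tends to $0$ because $|S| = \omega(1)$ (Condition 2). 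Thus $\ell(v) \to 1$.

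Then I would propagate. For every vertex $u \ne v$ in the $v$-component of $G - S$, its BFS parent $w$ is a neighbor of $u$ that precedes it in $\sigma$, so $a_w$ is observed by $u$, i.e. $w \in N_\sigma(u)$. Invoking optimality of the Bayesian rule once more with the rule ``copy $a_w$'' yields $\ell(u) \ge \ell(w)$. Inducting along the BFS tree from the root gives $\ell(u) \ge \ell(v)$ for every vertex in this component. By Condition 3 the component contains all but $o(n)$ vertices, so all but $o(n)$ vertices have learning quality at least $\ell(v) \to 1$; the remaining $|S| + o(n) = o(n)$ vertices contribute a vanishing fraction of the average. Summing, $L_\sigma(G) \ge \frac{n - o(n)}{n}\,\ell(v) \to 1$, which is the claimed asymptotic truth learning.

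The main obstacle, and the step I would spend the most care on, is the domination/optimality principle that underlies both the aggregation and the propagation: one must verify that under the Bayesian model $v$ (resp. $u$) genuinely has access to the actions it needs — that $N_\sigma(v) = S$ and that each BFS parent precedes its child — and that the MAP action dominates any fixed alternative rule based on the same information. This domination is exactly what lets me avoid computing the intractable exact posterior (the neighbors' actions are correlated through $\theta$ and through shared ancestors) and instead lower-bound each $\ell(\cdot)$ by a simple explicit rule. A secondary, routine point is that the disconnected $o(n)$ vertices together with the guinea pigs never drag the average below $1$, which follows purely from their sublinear count since every $\ell(\cdot) \ge q$.
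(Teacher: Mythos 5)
Your proposal is correct and follows the same skeleton as the paper's proof: guinea pigs $S$ first, then $v$ aggregates, then a traversal of the $v$-component of $G-S$ propagates the high-quality signal, and the $|S|+o(n)=o(n)$ leftover vertices are absorbed into the average. The genuine difference is in how you justify the two key inequalities. The paper asserts that $v$'s Bayesian decision \emph{equals} the majority of the $s+1$ i.i.d.\ signals and then applies the binomial tail bound, and for propagation it proves a separate monotonicity result (\Cref{proposition:propagate}, $\ell(u)\geq\max\{\ell(w):w\in N_\sigma(u)\}$) via a signal-flipping symmetry argument (\Cref{lemma:flip}) combined with Bayes' rule. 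You instead invoke a single domination principle throughout: the MAP action maximizes $\Prob{f(B_u)=\theta}$ over all rules $f$ measurable with respect to the agent's information, so $\ell(v)$ is lower-bounded by the accuracy of ``majority of $S$'' and $\ell(u)$ by the accuracy of ``copy the BFS parent $a_w$.'' This is the standard imitation/improvement principle from the social-learning literature; it is more elementary and self-contained than the paper's flip lemma, avoids having to characterize what the Bayesian rule actually does, and yields exactly the inequalities needed (your $\ell(u)\geq\ell(w)$ for one designated earlier neighbor is weaker than the paper's max over all earlier neighbors, but that weaker form suffices here). The only points worth a sentence of care in a full write-up are tie-breaking in the majority when $|S|$ is even (harmless for the Chernoff bound) and the observation that the MAP optimality argument is pointwise over realizations of $B_u$, hence valid despite the correlations among neighbors' actions.
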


\begin{proof}

To prove this theorem, there are two steps. First, we will show that $v$ achieves high learning rate by aggregating the decisions of its independent neighbors. Then, we show that under the Bayesian model, we can propagate this learning rate to the rest of the graph.

\medskip\noindent\textbf{Obtain a high quality signal.} We start with the tail bound of a binomial distribution. Suppose $X$ is a random variable  with a binomial distribution $\Bin(k, q)$ with an integer $k$ and $0
\leq q\leq 1$,  
$$\prob{X=a}={k\choose a}q^a (1-q)^{k-a}.$$
\begin{lemma}[Tail bound of Binomial distribution~\cite{tsun2020}]\label{lem:tail}
Let $X\sim \Bin(n,p)$, and let $\mu=E(X)$. For any $0<\delta<1$, 
$$\prob{X\geq (1+\delta)\mu}\leq \exp(-\frac{\delta^2\mu}{3}),\quad
\prob{X\leq (1-\delta)\mu}\leq \exp(-\frac{\delta^2\mu}{2}).$$ 
\end{lemma}

Let $v$, $S$ be as in the statement of \Cref{theorem:ordering_exists} with $s=|S|$. Then we construct a decision ordering $\sigma$. Let all of the vertices in $S$ decide their actions first, in any order within the set. Since all vertices in $S$ are independent, we have $\prob{a_i = \theta} = q$, $i\in [s]$. Next, we have $v = v_{s+1}$ make its observation and decision. It is clear that $v_{s+1}$'s decision would match the majority between its private observation and the decisions of $S$. Then $\prob{a_{s+1} = \theta} = \prob{X > (s+1)/2}$ where $X \sim \Bin(s + 1, q)$ is a binomial distribution. By~Lemma~\ref{lem:tail}, we have the following probability:
\[\prob{X > \frac{s+1}{2}} = 1 - \prob{X \leq \frac{s+1}{2}}  \geq 1 - \exp{-\frac{(2q-1)^2}{8q}(s+1)}=1 - e^{-O(s)}.\]
Since $s = \omega(1)$, $s\to \infty$ when $n\to \infty$. Thus, 
$\lim_{n \to \infty} \prob{a_{s+1} = \theta} = 1$. That is, the learning quality of agent $v_{s+1}$ is asymptotically approaching $1$.

\medskip\noindent\textbf{Propagate a high quality signal.} We first establish the property of monotonicity in the Bayesian model. That is, a node's success chance is no worse than the neighbors who have arrived earlier. The proof can be found in \Cref{appendix:proof-sufficient}.
\begin{restatable}{proposition}{BayesianRate}\label{proposition:propagate}
Given any learning network $(G, q, \sigma)$ under the Bayesian model, $$\ell(v) \geq \max \{\ell(v_i)| v_i \in N_{\sigma}(v)\}.$$
\end{restatable}

Now, we will use \Cref{proposition:propagate} to propagate the high quality signal from $v_{s+1}$. To do this, let the rest of the decision ordering follow the same ordering as a traversal of $V \setminus S$ starting from $v_{s+1}$, and then the disconnected vertices (if any) follow randomly. Since the rest of the ordering outside of the isolated vertices is a traversal, each vertex that comes after $v_{s+1}$ has a neighbor not in $S$ earlier in the order $\sigma$. Thus, by \Cref{proposition:propagate}, we have all of those vertices achieve learning rate at least $\ell(v_{s+1})$.
Thus the network learning rate $L_{\sigma}(G)$ is:
\[L_{\sigma}(G) \geq \frac{1}{n}(1 - e^{-O(s)})(n-o(n))
= 1 - e^{-O(s)}-\frac{o(n)}{n}. 
\]
Since $s=\omega(1)$, when $n$ goes to infinity, the network learning rate approaches 1.
\end{proof}

We remark that the monotonicity property is unique to the Bayesian model. For the majority rule model a node could be misled by the neighbors not having a high quality signal and thus have a dropping learning rate.
\subsection{{\ER} Graph}

An {\ER} graph~\cite{Erdos1960-hh} $G(n, p)$ is defined by $G = (V, E)$, where $|V| = n$, and 
$p \in (0,1)$, every unordered pair of vertices $v_1, v_2 \in V \times V$ is connected by an edge with probability $p$. 
The expected number of edges in an {\ER} graph is $p{n \choose 2}$. 
\cite{Erdos1960-hh} analyzed graph connectivity with different values of $p$.
\begin{itemize}
    \item If $p < 1/n$, then a graph in $G(n, p)$ will almost surely have no connected components of size larger than $O(\log n )$.
    \item If $p = 1/n$, then a graph in $G(n, p)$ will almost surely have a largest component whose size is of order $n^{2/3}$.
    \item If $np \to c > 1$, where $c$ is a constant, then a graph in $G(n, p)$ will almost surely have a unique giant component containing a constant fraction of the vertices. No other component contains more than $O(\log n)$ vertices.
    \item If $p< \frac{(1-\eps)\ln n} {n}$ for $0<\eps<1$, then a graph in $G(n, p)$ will almost surely contain isolated vertices, and thus is disconnected.
    \item If $p> \frac{(1+\eps)\ln n}{n}$ for $0<\eps<1$, then a graph in $G(n, p)$ will almost surely be connected.
\end{itemize}

When discussing {\ER} graphs, we consider the families $\mathcal{F} = \{G(n, p(n))\}$ where $p(n)$ is some function of $n$. In the following discussion, we may use $p = p(n)$ as shorthand. Since {\ER} graphs are random graphs, when considering asymptotic truth learning, we need to define its learning rate differently. Specifically, since an empty graph and a complete each has a non-zero probability of appearing in an {\ER} model, we cannot hope to find a good ordering for each graph in the family. Instead, we take the expected optimal learning rate for each graph $G$ by the probability that $G$ appears in the {\ER} model. Let $F(G)$ be the probability of graph $G \in G_n$, $G_n$ being the set of all graphs of size $n$, occurring according to the probability distribution of $G(n, p(n))$. We define:
\[L_\sigma(G(n,p(n))) = \sum_{G \in G_n} F(G)L_{\sigma}(G).\]
For asymptotic truth learning, we always consider the optimal $\sigma(G)$ for each $G$. For random asymptotic truth learning, $\sigma$ is a random ordering. We will now show various threshold probabilities where the graph achieves or doesn't achieve truth learning with probability approaching 1 as $n$ approaches infinity. Due to space limit, almost all proofs are in \Cref{appendix:proof-sufficient}.

\subsubsection{$p(n) = O(1/n)$: no asymptotic truth learning}

When $p$ is in the order of $O(1/n)$, there are a constant fraction of isolated vertices, who have a learning rate of $q$ in any ordering $\sigma$. Thus there is no ordering that supports truth learning. 

\begin{restatable}{proposition}{Erdoslowdegree}\label{proposition:ER-low}
The family of {\ER} graphs with parameter $p(n) = \frac{O(1)}{n}$ does not achieve asymptotic truth learning with any ordering $\sigma$ under both the Bayesian or majority rule model.
\end{restatable}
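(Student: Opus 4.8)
The plan is to show that when $p(n) = O(1/n)$, an Erd\"os--R\'enyi graph $G(n,p)$ almost surely has a constant fraction of isolated vertices, and that such vertices force the network learning rate to stay bounded away from $1$ regardless of the ordering or the inference model. The key insight is that an isolated vertex has no neighbors at all, so $N_\sigma(v) = \emptyset$ for every ordering $\sigma$; by the independent-learning analysis in \Cref{sec:challenges}, the best such a vertex can do is report its own private signal, giving $\ell(v) = q < 1$. This reasoning is order-independent and model-independent, which is exactly what we need to rule out truth learning under \emph{any} $\sigma$ and under both the Bayesian and majority rule models simultaneously.

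First I would compute the expected number of isolated vertices. Writing $p = c/n$ for a constant $c$ (the general $O(1/n)$ case follows by monotonicity, since fewer edges only create more isolated vertices), the probability that a fixed vertex is isolated is $(1-p)^{n-1} = (1 - c/n)^{n-1} \to e^{-c}$ as $n \to \infty$. Hence the expected number of isolated vertices is $n(1-p)^{n-1} \sim e^{-c} n$, a constant fraction of the graph. I would then argue that this count is concentrated: a standard second-moment computation (bounding the variance of the number of isolated vertices, which are only weakly dependent) gives that with high probability at least, say, $\frac{1}{2}e^{-c} n$ vertices are isolated.

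Next I would combine the concentration statement with the definition of $L_\sigma(G(n,p))$ as an expectation over the random graph. For any graph $G$ in the family with $I(G)$ isolated vertices, each isolated vertex contributes exactly $q$ to the sum $\sum_{v} \ell(v)$ no matter what $\sigma$ is chosen, while every other vertex contributes at most $1$. Thus
\[
L_\sigma(G) \leq \frac{1}{n}\big( (n - I(G)) \cdot 1 + I(G) \cdot q \big) = 1 - \frac{I(G)}{n}(1-q).
\]
Taking the expectation over $G \sim G(n,p)$ and using the high-probability lower bound on $I(G)$, the expected learning rate $L_\sigma(G(n,p))$ is bounded above by $1 - \Omega(1)$, which is a constant away from $1$. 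Since this bound holds even for the optimal $\sigma$ of each $G$, asymptotic truth learning fails.

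The main obstacle, though a mild one, is handling the general $O(1/n)$ regime rather than a fixed $p = c/n$: I would need to ensure the constant fraction of isolated vertices persists uniformly. The cleanest route is a coupling/monotonicity argument, noting that the number of isolated vertices is stochastically decreasing in $p$, so it suffices to verify the claim at the largest value $p = c/n$ permitted by the $O(1/n)$ bound; any smaller $p$ only increases the isolated-vertex count and strengthens the conclusion. The concentration step also requires a little care since isolatedness of distinct vertices is not independent, but the correlations are negligible at this edge density, so a second-moment or Chebyshev argument suffices.
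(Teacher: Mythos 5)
Your proposal is correct and takes essentially the same route as the paper: both identify the isolated vertices, show via a second-moment/Chebyshev concentration argument that a constant fraction of vertices are isolated with high probability (the paper quotes the mean and variance of the isolated-vertex count from van der Hofstad for $p \geq \tfrac{1}{2n}$ and handles smaller $p$ by monotonicity, exactly as you propose), and conclude that each isolated vertex has learning rate exactly $q$ under any ordering and either model, so $L_\sigma(G)$ is bounded away from $1$ even for the optimal $\sigma$.
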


\subsubsection{$p(n) = \frac{o(n)}{n}$ and $p(n) = \frac{\omega(1)}{n}$: asymptotic truth learning}

\begin{proposition}\label{prop:erdos-learning}
A family of {\ER} graphs with parameter $p$ satisfying $p = \frac{o(n)}{n}$ and $p = \frac{\omega(1)}{n}$ achieves asymptotic truth learning under the Bayesian model.
\end{proposition}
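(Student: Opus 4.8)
The plan is to verify the three hypotheses of \Cref{theorem:ordering_exists} for a typical graph drawn from $G(n,p)$ in this regime, and then conclude via that theorem. The strategy is: find a vertex $v$ of large degree whose neighborhood contains a large independent set $S$ with $|S|=\omega(1)$ and $|S|=o(n)$, such that deleting $S$ leaves all but $o(n)$ vertices in the component of $v$. First I would control the maximum (or a suitably large) degree. Since $p=\frac{\omega(1)}{n}$, the expected degree is $np=\omega(1)$, so by the Binomial tail bound in \Cref{lem:tail} there exists (with probability approaching $1$) a vertex $v$ whose degree is $\omega(1)$; in fact a typical vertex already has degree concentrated around $np=\omega(1)$. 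This handles the ``$\omega(1)$'' part of condition~(2).

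The next and most delicate step is producing the independent set $S\subset N(v)$ of size $\omega(1)$. The natural obstacle is that $N(v)$ itself may induce edges, so I cannot simply take all of $N(v)$. Two routes are available. One is to observe that when $p=\frac{o(n)}{n}$, i.e.\ $p=o(1)$, the neighborhood $N(v)$ induces a subgraph that is itself sparse --- the expected number of edges inside $N(v)$ is roughly $\binom{|N(v)|}{2}p \approx \tfrac{1}{2}(np)^2 p$, and one checks this is $o(|N(v)|)$ whenever $np\cdot p = o(1)$, which holds since $p=o(1)$ as long as $np$ does not grow too fast; deleting one endpoint of each such edge leaves an independent set of size $(1-o(1))|N(v)| = \omega(1)$. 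The cleaner and more robust route is a greedy/probabilistic argument: inside any graph on $m$ vertices with average degree $d$ there is an independent set of size $\Omega(m/(d+1))$, so taking $m=|N(v)|=\Theta(np)$ and bounding the induced average degree by $O(np\cdot p)=O(np^2)$ gives an independent set of size $\Omega\!\left(\frac{np}{1+np^2}\right)$, which is $\omega(1)$ across the whole regime $\frac{\omega(1)}{n}\le p\le \frac{o(n)}{n}$. I would then truncate $S$ to size $o(n)$ if necessary (for small $p$ the whole neighborhood is already $o(n)$), securing conditions~(1) and~(2).

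For condition~(3), the point is that after deleting $S$ (a set of size $o(n)$) from the graph, the vertex $v$ still lies in a component containing all but $o(n)$ vertices. Here I would invoke the connectivity/giant-component facts quoted from \cite{Erdos1960-hh}: when $np=\omega(1)$, $G(n,p)$ has a unique component covering $(1-o(1))n$ vertices, and removing an $o(n)$-sized vertex set $S$ cannot shrink this giant component by more than $|S|=o(n)$ vertices plus whatever gets disconnected, which one argues is also $o(n)$ by the robust expansion of the giant. The cleanest version: for $np\to\infty$ the giant is so well connected that deleting $o(n)$ vertices leaves a component of size $(1-o(1))n$ still containing $v$. I expect \textbf{this connectivity-robustness step to be the main obstacle}, since it requires more than the off-the-shelf connectivity thresholds --- I need that $v$ specifically remains attached to a near-spanning component after removing its own independent neighbors $S$. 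I would address it by choosing $v$ and $S$ so that $v$ retains at least one neighbor outside $S$ in the giant (which holds since $\deg(v)=\omega(1)$ but $v$ need not have all neighbors in $S$), and by citing the standard fact that the giant component of $G(n,p)$ with $np\to\infty$ is robustly connected under the deletion of any $o(n)$ vertices.

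Once all three conditions are established with probability $1-o(1)$ over the draw of $G$, \Cref{theorem:ordering_exists} furnishes an ordering $\sigma$ with $L_\sigma(G)\to 1$ for each such good $G$; on the exceptional $o(1)$-probability graphs we bound $L_\sigma(G)\ge 0$ trivially. Averaging via $L_\sigma(G(n,p))=\sum_{G}F(G)L_\sigma(G)$ then gives $L_\sigma(G(n,p))\ge (1-o(1))(1-o(1))\to 1$, completing the proof.
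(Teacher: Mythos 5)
Your overall skeleton matches the paper's: verify the three hypotheses of \Cref{theorem:ordering_exists} for a typical $G(n,p)$ and average over the exceptional graphs. The degree step is fine, and your independent-set step is a workable alternative to the paper's: the paper simply takes a random subset $S\subset N(v_\alpha)$ of size $s=p^{-1/3}$ (or $d/2$ for small $p$) and computes directly that it is independent with probability $(1-p)^{\binom{s}{2}}\geq 1-p^{1/3}\to 1$, whereas your Tur\'an-type bound $\Omega\bigl(np/(1+np^2)\bigr)=\Omega(\min(np,1/p))=\omega(1)$ covers the same range (your first route, requiring $np^2=o(1)$, does not cover $p$ between $n^{-1/2}$ and $o(1)$, as you half-acknowledge; only the second route suffices).

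The genuine gap is in condition (3). You reduce it to the claim that the giant component of $G(n,p)$ with $np\to\infty$ survives the deletion of \emph{any} $o(n)$ vertices with $(1-o(1))n$ vertices remaining in one component, and you cite this as a ``standard fact.'' It is not off-the-shelf in that generality: for slowly growing $np$ (say $np=\log\log n$) a union bound over all linear-sized vertex subsets fails, and establishing adversarial robustness of the giant requires expander-decomposition-type arguments that are much heavier than anything else in this proof. You correctly flag this as the main obstacle but do not resolve it. The paper avoids the issue entirely by exploiting that the deleted set is not adversarial: removing a uniformly random vertex $v$ and then its entire neighborhood from $G(n,p)$ leaves a graph distributed exactly as $G(k,p)$ with $k\sim\Bin(n-1,1-p)$, hence $k=\Theta(n)$ w.h.p., so \Cref{lemma:ER-large-component} applies \emph{directly to the residual graph} to give a component of size $n-o(n)$; the $\omega(1)$ retained neighbors $R=N(v_\alpha)\setminus S$ then hit that component with at least one edge w.h.p., reattaching $v_\alpha$. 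Replacing your robustness claim with this deferred-decisions observation is the missing idea needed to close the argument.
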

To prove this proposition, we will first prove the following lemma (with proof in \Cref{appendix:proof-sufficient}) and then use the sufficient condition of Theorem~\ref{theorem:ordering_exists}. 

\begin{restatable}{lemma}{ErdosLargeComponent}\label{lemma:ER-large-component}
An {\ER} graph $G(n, p)$ with $p = \frac{o(n)}{n}$ and $p = \frac{\omega(1)}{n}$ almost surely contains a giant component of size at least $n - o(n)$. 
\end{restatable}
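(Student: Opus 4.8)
The plan is to prove the lemma by a first‑moment argument that bounds the expected number of vertices lying in \emph{small} components, and then to observe that whatever remains must lie in a single giant component. Write $d = np$; the hypothesis $p = \frac{\omega(1)}{n}$ gives $d = \omega(1)$, so $d \to \infty$, and this is the only hypothesis I expect to use for this lemma (the conclusion in fact holds throughout the supercritical regime). Let $Z$ denote the number of vertices contained in connected components of size at most $n/2$. The goal is to show $\Ex[Z] = o(n)$, from which the giant component of size $n - o(n)$ follows.

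To estimate $\Ex[Z]$, I would sum over component sizes. For a fixed set $S$ of $k$ vertices to be exactly a connected component, $S$ must induce a connected subgraph and have no edges to the other $n-k$ vertices. Using Cayley's formula (there are $k^{k-2}$ spanning trees) together with a union bound gives $\prob{S \text{ connected}} \le k^{k-2} p^{k-1}$, while the isolation event contributes a factor $(1-p)^{k(n-k)} \le e^{-pk(n-k)}$. Hence the expected number of vertices in components of size exactly $k$ is at most $A_k := k \binom{n}{k} k^{k-2} p^{k-1} e^{-pk(n-k)}$. Applying $\binom{n}{k} \le (en/k)^k$ and, for $k \le n/2$, the bound $pk(n-k) \ge dk/2$, this simplifies to
\[
A_k \;\le\; \frac{1}{kp}\,(ed)^k e^{-dk/2} \;=\; \frac{1}{kp}\bigl(ed\,e^{-d/2}\bigr)^k .
\]
Summing the geometric series and using $\tfrac{1}{p} = \tfrac{n}{d}$ yields $\Ex[Z] \le \sum_{k=1}^{\floor{n/2}} A_k \le \frac{n}{d}\cdot\frac{\rho}{1-\rho}$ with $\rho := ed\,e^{-d/2}$. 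Since $d \to \infty$ forces $\rho \to 0$, the right‑hand side is $\frac{en\,e^{-d/2}}{1-\rho} = o(n)$, as desired.

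Finally I would convert this expectation bound into the claimed high‑probability statement. Markov's inequality gives $\prob{Z > \eps n} \le \Ex[Z]/(\eps n) = O(e^{-d/2})/\eps \to 0$ for every fixed $\eps > 0$; a standard diagonalization over a sequence $\eps_n \to 0$ then produces a single function $o(n)$ that bounds $Z$ almost surely. Because two distinct components of size exceeding $n/2$ cannot coexist, the $n - Z = n - o(n)$ vertices lying in components of size $> n/2$ must all belong to one component, which is therefore the unique giant component of size $n - o(n)$. The main obstacle I anticipate is controlling the prefactor $\tfrac{1}{p} = \tfrac{n}{d}$ coming from the $p^{k-1}$ term: this factor is itself as large as $\Theta(n)$ when $d$ grows slowly, so the bound survives only because the super‑geometric decay $d\,e^{-d/2}$ (driven precisely by $d \to \infty$) overwhelms it; keeping the restriction $k \le n/2$ to guarantee $e^{-dk/2}$ rather than a weaker exponent is what makes this balance work.
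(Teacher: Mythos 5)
Your proof is correct, but it takes a genuinely different route from the paper's. The paper invokes a black-box result (Theorem 4.8 of van der Hofstad's book), which gives concentration of the non-giant proportion $\mu$ around the fixed point $\eta_p$ of $\eta = e^{np(\eta-1)}$, and then argues that $\eta_p \to 0$ as $np \to \infty$; the whole content of the paper's proof is that fixed-point analysis. You instead give a self-contained first-moment computation: bounding the probability that a $k$-set is a component by $k^{k-2}p^{k-1}(1-p)^{k(n-k)}$ via Cayley's formula and independence of the two edge sets, and summing the resulting geometric series $\bigl(ed\,e^{-d/2}\bigr)^k$ over $k \le n/2$. Your accounting is sound: the $1/(kp) = \Theta(n/d)$ prefactor that you flag is indeed absorbed by the extra factor of $d\,e^{-d/2}$ from the $k=1$ term, the restriction $k \le n/2$ correctly guarantees the exponent $dk/2$, the two-large-components-cannot-coexist observation closes the argument, and like the paper you correctly use only the hypothesis $np = \omega(1)$. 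What each approach buys: the paper's is shorter on the page but leans on an external theorem and its $\eta_p \to 0$ argument is somewhat informal; yours is elementary, fully self-contained, and yields an explicit quantitative rate (only $O(n e^{-d/2})$ vertices outside the giant component), at the cost of a slightly longer calculation.
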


\begin{proof}[Proof of \Cref{prop:erdos-learning}]
Since an edge exists between any two vertices with independent probability $p$, the degree of each vertex follows the binomial distribution, $\Bin(n-1, p)$, with expectation $d = p(n-1)$, $d=o(n)$ and $d=\omega(1)$. By Lemma~\ref{lem:tail}, for a constant $\delta > 0$, $\prob{\deg(v_i) \notin ((1-\delta)d, (1+\delta)d)} \leq O(e^{-\delta^2d})$. 
Since $d = w(1)$, a randomly selected vertex $v_{\alpha} \in V$ has degree between $((1-\delta)d, (1+\delta)d)$ with high probability. Next, we select a random subset $S$ of the neighbors of $v_\alpha$ such that $|S| =s= 1/p^{1/3}$ if $p \geq 2/n^{3/4}$ and $d/2$ if not. In the first case, the probability of $S$ being an independent set is  $(1-p)^{s \choose 2} \geq 1-p \cdot {s \choose 2} \geq 1-p^{1/3}$.  In the second case, the probability is at least $1-\frac{pd^2}{4} \geq 1-1/n^{1/4}$. In both cases, $s=o(n)$ and $s=\omega(1)$. Thus, with high probability there exists an independent subset of neighbors of $v_\alpha$ of size $\omega(1)$.

Next, we need to show that $G$ remains connected after removing $v$ and $S$. It is clear that removing a random vertex $v$ from $G(n, p)$ is just $G(n-1,p)$. 
Removing all of the neighbors of $v$ is equivalent to removing nodes with probability $p$ from $G(n-1, p)$, so the resulting graph is a random graph $G(k, p)$, where $k$ takes values within $0$ and $n-1$ with probability following distribution $\Bin(n-1, 1-p)$. We call this resulting graph, $G'$. From Lemma~\ref{lem:tail} with high probability, $k = \Theta(n)$. 

With our parameter range of $p$, $G'$ contains with high probability a large component of size $n - o(n)$ by Lemma~\ref{lemma:ER-large-component}. Now in all cases, we know that $|R| = |N(v_\alpha)\setminus S| \geq (1/2-\delta)d = \omega(1)$, so we just need $R$ to be connected to all but $o(n)$ vertices of the graph. It is now trivial to see that the probability of at least one edge from $R$ reaching the large component approaches 1, thus satisfying the final necessary condition of Theorem~\ref{theorem:ordering_exists}.
\end{proof}

\subsubsection{$p(n) = \Theta(1)$: no random asymptotic truth learning}
When $p=\Theta(1)$, we have a dense graph with a quadratic number of edges. If we fix an ordering $\sigma$ of the vertices, we argue that herding happens on a graph generated randomly from {\ER} model. Thus, if we do not tailor the ordering to the specific graph topology obtained from the {\ER} model we do not have truth learning.

\begin{restatable}{proposition}{ErdosRenyiConstantP}\label{prop:ER-constant}
    A family of {\ER} graphs with $p = \Theta(1)$, does not achieve random asymptotic random truth learning for the Bayesian or majority vote models.
\end{restatable}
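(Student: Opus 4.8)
The plan is to show that, with constant probability over both the random graph and the random order, an information cascade on the incorrect value forms early and then sustains itself across essentially the whole vertex set, forcing the learning quality on that event down to $o(1)$ and hence $L_\sigma(G(n,p))$ a constant below $1$. By the symmetry $\theta\mapsto 1-\theta$ I may assume $\theta=0$ and track the fraction of agents that output the wrong value $1$. For the \emph{initiation}, fix a large constant $K$ (to be chosen as a function of $p$ and $q$) and condition on the event that the first $K$ agents in the order all receive an incorrect private signal, i.e.\ $s_i\neq\theta$ for $i\in[K]$. This event is independent of the graph and has constant probability $(1-q)^K>0$. On this event all $K$ agents act incorrectly, under both models and regardless of which edges the induced subgraph contains: arguing by induction on $i\le K$, every piece of information available to $v_i$ — its own wrong signal together with the already-published wrong actions of its earlier neighbors — points to $1$, so both the majority rule and the Bayesian posterior select $1$. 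This produces a seed of $K$ wrong agents at the front of the order.

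For the \emph{persistence} under the majority model, note first that since $p=\Theta(1)$, \Cref{lem:tail} gives that with high probability every later agent is adjacent to a $(p\pm o(1))$-fraction — hence $\Theta(K)\ge 2$ — of any block of $\Theta(K)$ earlier agents, in particular of the seed. I would then track the lead $D_i$ of wrong over correct actions among the first $i$ agents. It starts at $D_K=K$; as long as all earlier agents are wrong, the next agent sees only wrong neighbors, so it outputs $1$ unless it draws both a correct signal and at most one wrong neighbor, an event of probability at most $e^{-\Omega(pD_{i-1})}$ by \Cref{lem:tail}. Letting $i^\ast$ be the first agent to deviate, the deviating events for distinct $i^\ast$ are disjoint and at step $i^\ast$ the lead equals $i^\ast-1\ge K$, so the total failure probability is at most $\sum_{d\ge K}e^{-\Omega(pd)}<\tfrac12$ once $K$ is large. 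Hence, conditioned on initiation, all $n$ agents err with probability at least $\tfrac12$, so the unconditional probability of an all-wrong run is $\tfrac12(1-q)^K=\Omega(1)$.

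For the Bayesian model I would run the same seed-and-spread scheme but replace the vote count by the log-likelihood ratio. An agent observing $m\ge 2$ agreeing predecessors whose actions still reveal their signals accrues an LLR of at least $2\log\frac{q}{1-q}$ toward the wrong value, which already exceeds the $\log\frac{q}{1-q}$ that a single correct private signal can contribute; I would then argue that observing further agreeing (possibly already-herded) neighbors can only reinforce this bias, never reverse it, by monotonicity of the Bayesian update in the observed all-ones vector. Thus every agent past the seed again outputs $1$, and the same disjoint-first-deviation bookkeeping yields a constant-probability all-wrong run.

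On the constant-probability event identified above the learning quality is $o(1)$, so $L_\sigma(G(n,p))\le 1-\Omega(1)$, which is exactly the failure of asymptotic random truth learning. I expect the genuine difficulty to lie entirely in the Bayesian persistence step: unlike a majority voter, a rational agent discounts predecessors it can recognize as herded, so I must show quantitatively that the informative early actions alone generate enough LLR, that this margin survives without the agent being able to ``decode'' and discard the cascade, and that the Bayesian update really is monotone in the observed agreement. Two smaller technical points also need care — breaking ties in the vote and in the posterior, and the early window in which the seed is still of constant size, where \Cref{lem:tail} controls each step only up to a constant and it is the disjoint-deviation accounting, rather than a naive union bound, that rescues the argument.
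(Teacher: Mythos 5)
Your strategy is the same one the paper uses: condition on a constant-probability event that an initial block of agents all err, then show that with constant probability over the random graph every subsequent agent in the order has at least two earlier neighbours, so the wrong action cascades through the whole vertex set. The paper takes a seed of size $2$ and lower-bounds the product $\prod_{i\geq 3}p_i$ with $p_i = 1-(1-p)^{i-2}(ip+1-2p)$ the probability that $v_i$ has at least two earlier neighbours; your first-deviation sum $\sum_{d\geq K}e^{-\Omega(pd)}$ with a larger seed $K$ is the same estimate in additive form, and both give a constant lower bound on the all-wrong event.

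Two points need attention. First, the auxiliary claim that ``with high probability every later agent is adjacent to a $(p\pm o(1))$-fraction, hence at least $2$, of the seed'' is false for a seed of constant size: the number of neighbours a later agent has in a block of $K=O(1)$ vertices is $\Bin(K,p)$, which does not concentrate, so a constant fraction of later agents have at most one neighbour in the seed. This is not fatal, because your $D_i$ bookkeeping correctly measures adjacency to the full growing set of earlier (all-wrong) agents rather than to the seed alone --- which is exactly what the paper's $p_i$ does --- but the sentence should be deleted rather than relied upon. Second, and more substantively, your Bayesian persistence step is only a sketch with two acknowledged open sub-claims (that the informative early actions yield a log-likelihood margin exceeding one private signal even after a rational agent discounts predecessors it recognizes as herded, and that additional agreeing observations cannot reverse the bias). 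The paper does not supply this argument either: it simply asserts that, under both the Bayesian and majority models, an agent with at least two earlier neighbours who all acted wrong will act wrong regardless of its private signal. So your proposal matches the paper only if you are prepared to make the same assertion; as written, the step you correctly identify as the genuine difficulty is the one left unproved.
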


The above result does not eliminate the possibility of a good ordering chosen with respect to the particular graph topology.  Below we present two results depending on how far $p$ is from $1$. If $p$ is away from $1$ by  $\frac{\omega(n^\epsilon)}{n}$ truth learning exists. Otherwise, if $p$ is away from $1$ by $\frac{O(n^\epsilon)}{n}$, the graph is nearly a complete graph and there is no ordering for asymptotic truth learning. 

\subsubsection{$1-p(n) =\frac{\omega(n^\epsilon)}{n}$: asymptotic truth learning}

In this scenario, we argue that the conditions of \Cref{theorem:ordering_exists} are met. That is, since $p$ is away from $1$ by $\frac{\omega(n^\epsilon)}{n}$, there is an independent set of size $\omega(1)$ in the network. Thus we can choose an ordering that starts from these independent nodes and then propagate the aggregated signal to the rest of the network.   

\begin{restatable}{proposition}{ErdosRenyiComplementGreater}\label{prop:ER-complement-greater}
A family of {\ER} graphs with parameter $p$ satisfying $1-p = \frac{\omega(n^\epsilon)}{n}$ for any $1>\epsilon > 0$ achieves asymptotic truth learning under the Bayesian model.
\end{restatable}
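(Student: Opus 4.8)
The plan is to verify the three hypotheses of \Cref{theorem:ordering_exists} on a high-probability event and then invoke that theorem directly; the only genuine content is producing a vertex $v$ together with an independent set $S\subseteq N(v)$ of size $\omega(1)$ and $o(n)$, after which the extreme density of $G(n,p)$ makes the connectivity hypothesis automatic. The first thing I would do is record what the hypothesis on $p$ actually says. Writing $r=1-p$, the assumption ``$r=\omega(n^\epsilon)/n$ for every $\epsilon\in(0,1)$'' is equivalent to $r\,n^{\delta}\to\infty$ for every $\delta>0$, i.e.\ to
\[
\ln\tfrac{1}{1-p}=o(\ln n).
\]
This is precisely the regime in which the independence number of $G(n,p)$ — equivalently the clique number of the complement $\overline{G}=G(n,r)$ — is super-constant, since that independence number is of order $2\ln n/\ln\frac{1}{1-p}$. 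I would state this reformulation as the first step, since it is the conceptual heart of the argument.

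Next I would locate the independent set inside a fixed neighborhood. Fix any vertex $v$; its degree is distributed as $\Bin(n-1,p)$, so by \Cref{lem:tail} we have $|N(v)|=\Theta(n)$ with high probability. Crucially, the edges among the vertices of $N(v)$ are independent of the edges incident to $v$, so conditioned on the set $N(v)$ the induced subgraph $G[N(v)]$ is distributed as $G(|N(v)|,p)$. I would then run a second moment argument on the number of independent $k$-sets inside $G[N(v)]$: choosing $k=\omega(1)$ to grow slowly enough that the expected count $\binom{|N(v)|}{k}(1-p)^{\binom{k}{2}}\to\infty$ — possible exactly because $\ln\frac1{1-p}=o(\ln n)$ and $|N(v)|=\Theta(n)$ — and bounding the variance yields an independent set $S\subseteq N(v)$ with $|S|=k=\omega(1)$ and $k=o(n)$ with high probability. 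This gives hypotheses (1) and (2) of \Cref{theorem:ordering_exists} at once, with $S\subseteq N(v)$ by construction.

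It remains to check hypothesis (3). Here I would use that in this regime $p$ is bounded away from $0$ (indeed $p\to1$ once the gap $1-p$ vanishes), so $G(n,p)$ has vertex-connectivity $\Theta(n)$ with high probability; since $|S|=o(n)$, deleting $S$ cannot disconnect the graph, and $G-S$ is a single component containing $v$ and all of the remaining vertices. With all three hypotheses in force on a high-probability event, \Cref{theorem:ordering_exists} supplies a good ordering for each such graph, and the proposition follows by averaging $L_\sigma(G)$ against the \ER distribution: the exceptional graphs carry probability $o(1)$ and contribute $o(1)$ to the average, so the weighted learning rate still tends to $1$.

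The step I expect to be the main obstacle is the second one: pinning down a super-constant independent set exactly at the boundary where the independence number of a dense random graph transitions from $\Theta(1)$ to $\omega(1)$. A single fixed $\epsilon$ would only guarantee an independent set of constant size $\approx 2/(1-\epsilon)$, so the argument genuinely requires the ``for every $\epsilon$'' reading of the hypothesis, i.e.\ $\ln\frac{1}{1-p}=o(\ln n)$, and the second moment computation must be carried out with $k$ allowed to grow rather than held fixed. A secondary technical point is the conditioning that turns $G[N(v)]$ into a clean copy of $G(m,p)$, which must be phrased carefully so that the independence-number estimate can be applied as a black box.
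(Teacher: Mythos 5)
Your proposal is correct in outline and lands on the same two pillars as the paper -- verify the hypotheses of \Cref{theorem:ordering_exists}, with the only real work being a super-constant independent set, which in this density regime comes from the fact that the clique number of the complement $G(n,1-p)$ is $\omega(1)$ exactly when $\log\frac{1}{1-p}=o(\log n)$ (your reformulation of the ``for every $\epsilon$'' hypothesis matches how the paper's \Cref{lemma:large-clique} is actually used). Where you diverge is in how $v$ and $S$ are produced. The paper exposes the graph in two rounds: it generates $G'=G(n/2,p)$, extracts an independent set $S$ of size $\omega(1)$ from $G'$ via \Cref{lemma:large-clique} applied to the complement (a re-run of the Grimmett--McDiarmid first/second-moment computation), and only then adds the remaining $n/2$ vertices, picking $v$ among them; since $p$ is large, $v$ is adjacent to $\omega(1)$ members of $S$ whp, and $G-S$ stays connected because no second-round vertex was deleted. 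You instead fix $v$ first, note $|N(v)|=\Theta(n)$, condition so that $G[N(v)]\sim G(|N(v)|,p)$, and run the second-moment argument for independent $k$-sets inside that induced subgraph, then invoke $\Theta(n)$ vertex-connectivity for condition (3). Both routes rest on the same moment computation at the $\omega(1)$ threshold; the paper's two-round exposure buys a cleaner verification of adjacency to $S$ and of connectivity after deletion (at the cost of the slightly awkward ``adjacent to $\omega(1)$ of $S$'' step, where the set actually fed to \Cref{theorem:ordering_exists} is $S\cap N(v)$), while your version is more direct but puts the burden on carefully stating the conditioning and on carrying out the variance bound with growing $k$ -- which is, in substance, re-proving \Cref{lemma:large-clique} inside $N(v)$. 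One caveat applying equally to both arguments: the connectivity step silently assumes the dense regime ($p$ bounded away from $0$, consistent with the surrounding subsections), since the literal hypothesis $1-p=\omega(n^\epsilon)/n$ alone does not preclude small $p$.
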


\subsubsection{$1-p(n) = \frac{O(n^\epsilon)}{n}$: no asymptotic truth learning}

When $p$ is too close to $1$, i.e., with gap within $\frac{O(n^\epsilon)}{n}$ for any $1>\epsilon>0$, in any ordering $\sigma$, all but a constant number of nodes have at least two neighbors that are placed earlier in $\sigma$, therefore herding happens with a constant probability which means no truth learning. 

\begin{restatable}{proposition}{ErdosRenyiComplementLesser}\label{prop:ER-complement-lesser}
A family of {\ER} graphs with parameter $p$ satisfying $1-p = \frac{O(n^\epsilon)}{n}$ for any $\epsilon > 0$ does not achieve asymptotic truth learning under the Bayesian model.
\end{restatable}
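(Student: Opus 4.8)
The plan is to exploit that $1-p = O(n^\epsilon)/n$ (with $\epsilon<1$) makes the graph a tiny perturbation of the complete graph, so that the complete-graph information cascade from \Cref{sec:challenges} still traps a constant fraction of agents under \emph{every} ordering. First I would pass to the complement graph $\bar{G}\sim G(n,1-p)$, whose vertex degrees follow $\Bin(n-1,1-p)$ with mean $(n-1)(1-p)=O(n^\epsilon)$. A Chernoff bound (\Cref{lem:tail}) together with a union bound over the $n$ vertices shows that, with probability tending to $1$, the maximum complement degree is $D=O(n^\epsilon)=o(n)$; crucially this bound is independent of the ordering. Consequently, for any fixed ordering $\sigma$, the agent in position $i$ is blind to at most $D$ of its $i-1$ predecessors, so every agent in a position $i>D+2$ has at least two earlier neighbours, and the ``late'' agents --- a $1-o(1)$ fraction --- observe all but $\le D$ of the agents preceding them.

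Next I would set up the cascade. Fix $\sigma$ and condition on a constant-probability \textbf{trigger} event on the first few private signals --- for instance the first couple of agents receiving the wrong signal $w=1-\theta$ --- which, by exactly the Bayesian cascade reasoning used for the complete graph, tips the publicly inferable log-likelihood ratio in favour of $w$ by an amount $c_0\lambda$ strictly larger than the weight $\lambda=\ln\frac{q}{1-q}$ of a single private signal. The defining feature of an information cascade is that this public lean is \emph{bounded} (it encodes only the $O(1)$ informative pre-cascade signals) and yet exceeds the single-signal threshold, so any Bayesian agent that perceives the cascade adopts $w$ regardless of its own observation.

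The key step, and the one I expect to be the real obstacle, is to show this cascade survives the missing edges and hence traps all but $o(n)$ agents. Because a late agent observes $\Omega(n)$ of the earlier agents, almost all of whom have already herded on $w$, it should infer ``a cascade on $w$ has occurred'' --- an event whose likelihood ratio is the same bounded quantity $\approx c_0\lambda$ whether or not the $\le D$ unobserved predecessors happen to contain the original trigger --- and therefore again adopt $w$. Formalising this requires unwinding the recursive Bayesian inference: each agent's conclusion that its observed neighbours were themselves cascading must be justified down the order, and one must verify that the $o(n)$ holes per vertex never accumulate enough to flip the inferred sign of the lean for more than $o(n)$ agents. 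This robustness of the recursion under partial observation is where the work lies: in the complete graph the entire history is transparent and the cascade is immediate, whereas here each agent is blind to up to $D$ predecessors. (Note that the monotonicity of \Cref{proposition:propagate} is of no use for a negative result, since it only lower-bounds success rates.)

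Granting that the cascade persists, a $1-o(1)$ fraction of agents output $w=1-\theta$ on the trigger event, so conditioned on it the number of correct agents is $o(n)$; since the trigger has constant probability at least $(1-q)^2$, this yields $L_\sigma(G)\le 1-(1-q)^2+o(1)$, bounded away from $1$. As the complement-degree bound holds simultaneously for all orderings with probability tending to $1$, we obtain $\sup_\sigma L_\sigma(G)\le 1-\Omega(1)$ for a $1-o(1)$ fraction of graphs $G\sim G(n,p)$, whence $L_\sigma(G(n,p))$ does not approach $1$ and the family of \ER graphs admits no asymptotic truth learning in this regime.
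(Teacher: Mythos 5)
Your approach is genuinely different from the paper's, and it has a real gap at its core --- one you yourself flag with ``Granting that the cascade persists.'' The entire difficulty of this proposition lives in that granted step. Once each agent can be blind to up to $D=O(n^\epsilon)$ predecessors, the Bayesian recursion is no longer the transparent complete-graph cascade: agent $i$ must integrate over the unobserved actions and signals of its missing predecessors, whose identities depend on the graph, and the claim that the inferred public log-likelihood ratio stays pinned near the same bounded value $c_0\lambda$ for all but $o(n)$ agents is exactly what would need to be proved. Worse, the induction has no secure base: your trigger event concerns the private signals of $v_1,v_2$, but $v_3,\dots,v_{D+2}$ may fail to observe the trigger agents at all (they are only guaranteed $i-1-D$ earlier neighbours, which can be zero), so the assertion that a late agent ``observes $\Omega(n)$ earlier agents, almost all of whom have already herded'' presupposes the cascade it is meant to establish. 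Your final union-bound framing ($\sup_\sigma L_\sigma(G)$ bounded away from $1$ for a $1-o(1)$ fraction of graphs) is the right shape for the conclusion, but it rests entirely on the unproved persistence step.

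The paper avoids this partial-observation analysis altogether with a combinatorial reduction. For an arbitrary ordering $\sigma$, let $S_\sigma$ be the set of agents with at most one earlier neighbour; the orientation toward earlier vertices shows $S_\sigma$ induces a forest. \Cref{lemma:small-forest} (a first-moment count of induced forests, which collapses when $1-p=O(n^{\epsilon-1})$) gives $|S_\sigma|=O(1)$ with high probability, uniformly over orderings. Conditioning on the constant-probability event that \emph{all} of $S_\sigma$ errs, every remaining agent has at least two earlier neighbours, and by induction all of its earlier neighbours are unanimously wrong, so the standard two-unanimous-predecessors cascade argument (as in \Cref{prop:ER-constant}) applies verbatim --- no agent ever faces a mixed or partially hidden history. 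If you want to salvage your route, the lesson is that you should not try to control what each agent \emph{misses}; you should instead isolate the $O(1)$ agents who could possibly act on their own signal and condition on all of them being wrong, after which the Bayesian decision of everyone else is unambiguous.
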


To prove this, we begin with a lemma:
\begin{lemma}\label{lemma:small-forest}
A family of {\ER} graphs with parameter $p$ satisfying $p = O(n^{-\epsilon})$ for any constant $1 > \epsilon > 0$ has with high probability $O(1)$ maximum induced forest size.
\end{lemma}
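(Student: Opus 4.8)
The plan is to bound the maximum induced forest by a direct first-moment (union-bound) argument. The key quantity is the probability $q$ that a fixed pair of vertices is a \emph{non-edge}; this is the small parameter, $q = O(n^{-\epsilon})$, in the near-complete regime where the bound is applied in \Cref{prop:ER-complement-lesser}, and it is precisely the scarcity of non-edges that forces every induced forest to be tiny. Concretely, a forest on $k$ vertices has at most $k-1$ edges, so if a $k$-set $S$ induces a forest then at least $\binom{k}{2}-(k-1)=\binom{k-1}{2}$ of its pairs must be non-edges. The strategy is to show that for a constant threshold $k_0=k_0(\epsilon)$ the expected number of $k_0$-sets inducing a forest tends to $0$.

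First I would fix an integer $k$ and bound the probability that a given $k$-set $S$ induces a forest. Summing over the number $j\le k-1$ of present edges (and noting that the $j=k-1$ term dominates, since having fewer non-edges is the less likely event when $q<1$), I obtain
\[
\Prob{S \text{ induces a forest}} \;\le\; \sum_{j=0}^{k-1}\binom{\binom{k}{2}}{j}\, q^{\binom{k}{2}-j} \;\le\; k\,\binom{\binom{k}{2}}{k-1}\, q^{\binom{k-1}{2}}.
\]
Next I would take a union bound over all $\binom{n}{k}\le n^{k}$ choices of $S$, so the expected number $N_k$ of forest-inducing $k$-sets satisfies
\[
\Ex[N_k] \;\le\; n^{k}\,k\,\binom{\binom{k}{2}}{k-1}\, q^{\binom{k-1}{2}} \;\le\; k\,k^{2k}\,C^{\binom{k-1}{2}}\; n^{\,k-\epsilon\binom{k-1}{2}},
\]
using $q\le C n^{-\epsilon}$ and $\binom{\binom{k}{2}}{k-1}\le k^{2k}$. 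I would then freeze $k$ at a constant $k_0=O(1/\epsilon)$ chosen so that $\epsilon\binom{k_0-1}{2}>k_0$, i.e.\ any $k_0$ slightly above $2/\epsilon$. For this fixed $k_0$ the combinatorial prefactors are constants in $n$, while the power $n^{\,k_0-\epsilon\binom{k_0-1}{2}}\to 0$, so $\Ex[N_{k_0}]\to 0$. By Markov's inequality, with high probability no $k_0$-set induces a forest; and since any induced forest on $\ge k_0$ vertices restricts to a forest on every $k_0$-subset, this forces the maximum induced forest to have fewer than $k_0=O(1)$ vertices.

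The main obstacle is ensuring the union-bound cost $n^{k}\approx e^{k\ln n}$ is overwhelmed by the non-edge penalty $q^{\binom{k-1}{2}} = n^{-\epsilon\binom{k-1}{2}+O(1)}$. This is exactly where polynomial sparsity of the non-edges, $q=O(n^{-\epsilon})$, is essential: because the number of forced non-edges $\binom{k-1}{2}$ grows quadratically in $k$ while the union bound only costs $n^{k}$, a \emph{constant} threshold $k_0$ already drives the expectation to zero, which is what yields an $O(1)$ bound rather than a weaker $o(n)$ bound. The only remaining technical points are to confirm that the $j=k-1$ term dominates the edge-count sum and that the frozen constant factors $k_0^{2k_0}$ and $C^{\binom{k_0-1}{2}}$ are harmless; both are routine once $k_0$ is fixed.
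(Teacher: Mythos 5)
Your proposal is correct and takes essentially the same route as the paper: a first-moment/union-bound on the number of $k$-vertex sets inducing a forest, driven by the fact that a forest on $k$ vertices forces at least $\binom{k}{2}-(k-1)=\binom{k-1}{2}$ non-edges, each of probability $q=1-p=O(n^{-\epsilon})$ (you correctly read the lemma's ``$p=O(n^{-\epsilon})$'' as a typo for $1-p$, which is what the paper's own proof plugs in and how the lemma is used in \Cref{prop:ER-complement-lesser}). The only differences are bookkeeping: the paper bounds the number of candidate forests by the rooted-forest count $(k+1)^{k-1}$ and concludes via $k=\omega(1)$, whereas you bound by $\sum_j\binom{\binom{k}{2}}{j}$ and finish more sharply by freezing an explicit constant threshold $k_0\approx 2/\epsilon$ and invoking heredity of induced forests on $k_0$-subsets --- a slightly cleaner and more rigorous way to land the $O(1)$ conclusion than the paper's final step.
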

\begin{proof}
First, let $N^k_n$ be the random variable for number of $k$-cliques in $G(n, p)$. We have:
\[\mathbb{E}(N^k_n) \leq \binom{n}{k}(1-p)^{\binom{k}{2}-k+1}p^{k-1}(k+1)^{k-1}\]
from the well-known fact that there can be $(x+1)^{x-1}$ possible rooted forests on $x$ labeled nodes. Plugging in $1-p = O(n^{-\epsilon})$, we get:
\[\mathbb{E}(N^k_n) = O(n^{k}e^{-k}n^{-k^2}k^k)\]
It's clear that if $k = \omega(1)$, we get $\mathbb{E}(N^k_n) = o(1)$, which means $\mathbb{P}(N^k_n \geq 1)= o(1)$ from Markov's inequality and thus with high probability, the maximum induced forest size is not $\omega(1)$.
\end{proof}
\begin{proof}[Proof of \Cref{prop:ER-complement-lesser}]
Let $\sigma$ be any decision ordering of $G(n, p)$. Then, let $S_\sigma$ be the set of vertices such that no more than 1 of their neighbors come before them in $\sigma$. We have that $S_\sigma$ must be an induced forest, so $|S_\sigma| = O(1)$ with high probability by \Cref{lemma:small-forest}. Now, we argue that if every node in $S_\sigma$ chooses the wrong decision, then every other node will follow suit regardless of their private signals.

Let's look at the smallest $i$ such that $v_i \notin S_\sigma$. Note that $i \geq 3$. Clearly, every node before it chose incorrectly and  $v_i$ is connected to at least 2 of them, so it will just follow the other nodes' decision. Now this is true inductively. For every node not in $S_\sigma$, it is connected to at least 2 nodes before it, and both of those nodes chose incorrectly, so the current node will also choose incorrectly. Finally, since $|S_\sigma| = O(1)$ the probability of everything in $S_\sigma$ choosing incorrectly is also $O(1)$. Thus, we have herding occurs and asymptotic truth learning does not occur.
\end{proof}

\subsection{Preferential Attachment Graph}

 A preferential attachment (PA) graph, with positive integer $k = O(1)$ as a parameter, is defined by $G = (V, E)$ where $|V| = n$ and the following process for generating edges:
\begin{enumerate}
    \item 
    Start with $G_{k+1}$ 
    as a complete graph on $k+1$ vertices.
    \item Given $G_t$, $t\geq k+1$, generate $G_{t+1}$ by adding a vertex $v$ as well as $k$ undirected edges from $v$ to vertices in $G_t$. Each edge connects $v$ to a vertex $u$ chosen randomly with probability $\frac{\deg(u)}{D}$ where $D$ is the total current degree in $G_t$. 
    Continue until $G_n$ is generated and let $G = G_n$.
\end{enumerate}

We use this variation of the PA model to enforce that the graph is connected. Some other models allow self loops. We note that as $n \to \infty$, this variation converges to the same degree distribution as the ones used in proofs in \cite{Hofstad_2016}, as noted in that book. When $k$ is 1, the graph is a tree. 
When $k\geq 2$, by using the same argument as in Proposition~\ref{prop:ER-constant}, if the decision ordering is the natural arriving order of the PA graph, there is a constant probability that herding happens. Further, for any constant $k$, a random ordering does not support truth learning since the average graph degree is a constant, by \Cref{thm:constant-degree}.
In the following 
we show that for $k\geq 2$ we can achieve asymptotic truth learning by carefully choosing an ordering for graphs following the PA model.

\subsubsection{$k=O(1)$: asymptotic truth learning}
We first introduce a known theorem about the degree of a vertex as the PA model evolves. Define $\Gamma$ as the commonly used \emph{Gamma function}:
\begin{equation}
    \Gamma(t) = \int_{0}^\infty x^{t-1}e^{-x}dx
\end{equation}

\begin{theorem}[\cite{Hofstad_2016}, Exercise 8.14, Equation 8.3.11]\label{theorem:degree}
    Let $v$ be the vertex inserted to make $G_b$. Then as $n \to \infty$ and $a = \Theta(n)$, the degree of $v$, $\deg(v)$ in $G_a$ converges almost surely to a random variable with expected value:
    \begin{equation}
        \sum_{s=1}^k \frac{\Gamma(ka + \frac{1}{2})}{\Gamma(ka)}\frac{\Gamma(k(b-1) + s)}{\Gamma(k(b-1) + s + \frac{1}{2})}
    \end{equation}
\end{theorem}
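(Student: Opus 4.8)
The plan is to reduce the $m=k$ preferential-attachment model to the single-edge ($m=1$) model, where the degree of a fixed vertex is a clean Markov chain, and then analyze that chain by telescoping its one-step expected growth and by a nonnegative-martingale argument for the almost-sure convergence. The sum over $s = 1, \ldots, k$ in the statement is exactly what the reduction produces, so identifying the right decomposition is the conceptual heart of the argument.

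First I would invoke the standard collapsing construction (as in \cite{Hofstad_2016}): the $m=k$ graph $G_a$ is obtained from the $m=1$ graph on $ka$ vertices by merging, for each original vertex, its $k$ consecutive single-edge copies into one. Under this identification, the vertex $v$ born at time $b$ corresponds to the $k$ copies indexed $k(b-1)+1, \ldots, kb$, and hence
\begin{equation}
\deg(v) \text{ in } G_a \;=\; \sum_{s=1}^{k} \deg^{(1)}\!\big(k(b-1)+s\big) \text{ in } G^{(1)}_{ka},
\end{equation}
where $\deg^{(1)}$ denotes degree in the $m=1$ model. Because the paper's connected, self-loop-free variant and the variant of \cite{Hofstad_2016} agree in degree distribution as $n\to\infty$, the finitely many boundary edges near the seed clique contribute only an $O(1)$ correction to the edge-clock and can be absorbed into the asymptotics.

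Next I would analyze a single copy $w := k(b-1)+s$ in the $m=1$ model. Here each new vertex adds one edge, so the total degree at time $N$ is $2N - O(1)$ and a new edge attaches to $w$ with probability $\deg^{(1)}_N(w)/(2N-O(1))$; this makes $(\deg^{(1)}_N(w))_{N\ge w}$ a Markov chain with
\begin{equation}
\mathbb{E}\big[\deg^{(1)}_{N+1}(w)\mid \mathcal{F}_N\big] \;=\; \deg^{(1)}_N(w)\Big(1 + \tfrac{1}{2N - O(1)}\Big).
\end{equation}
Telescoping this identity from $N = w$ to $N = ka$ and using $\prod_{N}(N+\tfrac12)/N = \Gamma(ka+\tfrac12)\Gamma(w)/\big(\Gamma(ka)\Gamma(w+\tfrac12)\big)$ produces the $s$-th summand $\frac{\Gamma(ka+\frac12)}{\Gamma(ka)}\frac{\Gamma(k(b-1)+s)}{\Gamma(k(b-1)+s+\frac12)}$; summing over $s$ and applying linearity of expectation to the displayed decomposition yields the stated mean for finite $a$. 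For the almost-sure statement I would normalize each copy into a nonnegative martingale $M^{(s)}_N = \deg^{(1)}_N(w)/c_N$ with $c_N = \prod_{i=w}^{N-1}(1+\tfrac{1}{2i-O(1)}) \sim (N/w)^{1/2}$ by the Gamma asymptotic $\Gamma(x+\tfrac12)/\Gamma(x)\sim x^{1/2}$; each $M^{(s)}_N$ converges almost surely by the martingale convergence theorem, and a finite sum of almost-surely convergent sequences converges almost surely, which identifies the limiting random variable. An $L^2$ bound obtained from the second-moment recursion supplies uniform integrability, so the limit has the computed expectation.

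The hard part will be pinning down the exact constants and the $+\tfrac12$ shift rather than merely the $\sqrt{a/b}$ scaling: this requires tracking the total degree at sub-leading order through the collapsing reduction and verifying that the seed-clique and no-self-loop modifications of our model perturb the edge-clock by only $O(1)$, so that the edge indices land exactly at $k(b-1)+s$ and $ka$. The telescoping and martingale steps are otherwise routine.
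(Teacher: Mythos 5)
This theorem is not proved in the paper at all: it is imported verbatim from van der Hofstad (Exercise~8.14 / Equation~8.3.11) and used as a black box, so there is no in-paper argument to compare against. Your sketch is the standard textbook route for exactly this result --- collapse the $m=k$ model onto the $m=1$ model on $ka$ vertices, write $\deg(v)$ as the sum of the degrees of the $k$ copies $k(b-1)+1,\dots,kb$, telescope the one-step expected growth to get the Gamma-ratio product, and normalize each copy's degree into a nonnegative martingale for the almost-sure statement --- and the $s$-sum in the formula is indeed produced by precisely this decomposition. Two caveats are worth flagging. First, the point you defer (``pinning down the exact constants and the $+\tfrac12$ shift'') is not a routine cleanup but the entire content of the exact formula: the Gamma arguments $ka$ and $k(b-1)+s+\tfrac12$ come from the attachment probability being exactly $\deg/(2N)$ versus $\deg/(2N\pm c)$, which in turn depends on whether self-loops are permitted and on the seed graph. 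The paper's variant (seed clique on $k+1$ vertices, no self-loops) differs from the book's model in exactly these respects, so an ``absorb the $O(1)$ into the asymptotics'' argument yields only the $\Theta(\sqrt{a/w})$ scaling, not the displayed identity; to recover the exact expression you would have to prove it for the book's model and then separately argue (as the paper does informally) that the variant has the same limiting behavior. Second, the theorem as stated is itself loose --- $\deg(v)$ in $G_a$ diverges a.s.\ as $a\to\infty$, and what converges is the martingale-rescaled degree --- and your normalization $M^{(s)}_N=\deg^{(1)}_N(w)/c_N$ is the correct reading of it. Since the paper only uses the conclusion $\mathbb{E}(\xi)=\Omega\bigl(\sqrt{a/\log n}\bigr)$ downstream, the order-of-magnitude version your sketch fully delivers is all that is actually needed.
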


\begin{theorem}
A class of preferential attachment graphs for any constant $k$ achieves asymptotic truth learning under the Bayesian model.
\end{theorem}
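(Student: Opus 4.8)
The plan is to verify that a preferential attachment graph with constant $k$ satisfies, with high probability, the three hypotheses of \Cref{theorem:ordering_exists}, and then invoke that theorem to obtain an ordering achieving asymptotic truth learning. Concretely, I need to exhibit a vertex $v$ together with an independent set $S\subseteq N(v)$ of size $\omega(1)$ and $o(n)$ whose removal leaves $v$ in a component missing only $o(n)$ vertices.

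First I would produce the high-degree hub. Taking $v$ to be one of the initial vertices (so $b=O(1)$) and $a=\Theta(n)$ in \Cref{theorem:degree}, the Gamma-ratio asymptotics $\Gamma(x+\tfrac12)/\Gamma(x)\sim x^{1/2}$ make each summand of order $\sqrt{ka}/\sqrt{k(b-1)+s}=\Theta(\sqrt n)$, so $\deg(v)=\Theta(\sqrt n)$ in expectation; a concentration argument (together with \Cref{lem:tail}-type tail bounds on the attachment process) upgrades this to $\deg(v)=\Theta(\sqrt n)$ with high probability, which is both $\omega(1)$ and $o(n)$. To extract $S$, I would bound the number of edges induced inside $N(v)$: since the whole graph has only $kn=O(n)$ edges and PA graphs contain few triangles through any fixed vertex, the subgraph induced on $N(v)$ is sparse, so a greedy or Tur\'an-type bound yields an independent subset of size $\Omega(|N(v)|/(\bar d+1))=\omega(1)$, where $\bar d$ is the average degree inside $N(v)$. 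This secures hypotheses (1) and (2).

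The heart of the argument, and the step I expect to be hardest, is hypothesis (3): that deleting $S$ keeps all but $o(n)$ vertices in $v$'s component. Here I would not use an arbitrary independent set but instead select $S$ among the \emph{late-arriving} neighbors of $v$, i.e., vertices that joined near the end of the process and therefore have degree close to $k$ with few or no later vertices attached to them. Removing such a near-leaf disconnects only its (few) descendants, so removing all of $S$ severs at most $o(n)$ vertices. I still need to confirm two things: that $v$ acquires $\omega(1)$ such late, mutually independent, low-degree neighbors --- which follows because the attachment probability $\deg_t(v)/(2kt)\asymp 1/\sqrt t$, summed over the final $\Theta(n)$ steps, contributes $\Theta(\sqrt n)$ fresh neighbors of $v$ --- and that the bulk of the graph stays connected to $v$ after the deletion. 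For the latter I would use the robustness of the PA graph: for $k\ge 2$ each arriving vertex attaches to at least two distinct earlier vertices, so the graph is $2$-connected with a richly interlinked early core, and deleting a small set of low-degree vertices cannot shed more than a vanishing fraction of the mass from the giant component containing $v$.

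Assembling these pieces, hypotheses (1)--(3) hold with high probability, and since the offending events (too-low hub degree, failure of the independent low-degree neighbor count, or an unusually large severed part) each occur with probability $o(1)$, they contribute only $o(1)$ to the expected learning rate $L_\sigma(G)$. Applying \Cref{theorem:ordering_exists} on the overwhelming-probability event then yields an ordering with network learning rate $1-o(1)$, proving asymptotic truth learning. The main obstacle throughout is the connectivity bookkeeping in hypothesis (3): quantifying precisely how many vertices can be separated from $v$ by removing the chosen neighbor set, which is exactly why I route the argument through late, low-degree neighbors whose descendant sets are provably small.
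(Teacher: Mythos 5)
Your proposal follows essentially the same route as the paper: locate an early hub $v$ of degree $\Omega(\sqrt{n/\log n})$ via the Gamma-ratio asymptotics, and take $S$ to be late-arriving, low-degree neighbors of $v$ so that independence and the connectivity condition (3) of \Cref{theorem:ordering_exists} come for free. The paper's version is slightly cleaner at one point --- it takes $S$ to consist of neighbors of degree \emph{exactly} $k$ (of which there are $\Theta(n)$ by the power-law limit), so these vertices have no later attachments at all, which immediately gives both that $S$ is independent and that $G-S$ stays connected without any descendant bookkeeping or appeal to $2$-connectivity (and also covers the $k=1$ tree case).
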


\begin{proof}
Our goal is to show that with high probability, the PA graph satisfies the conditions of \Cref{theorem:ordering_exists}. First, we know that as $n\to \infty$, the degree distribution of $G$ converges almost surely to the power law distribution with exponent $3$ ~\cite{Hofstad_2016}. We may write this distribution as $d(x) = Cx^{-3}$ where $d(x)$ is the proportion of vertices with degree $x$ and $C > 0$ is some constant dependent on $k$. We have with high probability, the number of vertices of degree exactly $k$ converges to $d(k) = Ck^{-3}n = \Theta(n)$. Let $A$ be the set of vertices in $G_t$ with $t=\log(n)$ and let $0 < \epsilon < Ck^{-3}$. \Cref{theorem:degree} says for $a \geq \epsilon n$ and for all $v \in A$, $\deg(v)$ in $G_a$ converges almost surely to a random variable, $\xi$ with expectation:
\begin{equation}
    \mathbb{E}(\xi) = \sum_{s=1}^k \frac{\Gamma(ka + \frac{1}{2})}{\Gamma(ka)}\frac{\Gamma(k(b-1) + s)}{\Gamma(k(b-1) + s + \frac{1}{2})}
\end{equation}
for some $b \leq \log(n)$. A commonly known fact is $\frac{\Gamma(x + \frac{1}{2})}{\Gamma(x)} = \sqrt{x}(1+O(\frac{1}{x}))$ for $x \to \infty$. Then, taking $n \to \infty$ as well as using $k = O(1)$ gives us the complexity $\mathbb{E}(\xi) = \Omega(\sqrt{\frac{a}{\log(n)}})$. That means $\mathbb{P}(\xi = \Omega(\sqrt{\frac{a}{\log(n)}})) > 0$. Then, as $n \to \infty$, we have with high probability that we can find a vertex $v_\alpha \in A$ such that at $G_{\epsilon n}$, $\deg(v_\alpha) = \Omega(\sqrt{\frac{\epsilon n}{\log(n)}})$. We get that the probability for a new vertex inserted at $G_a$ to be a neighbor of $v_\alpha$ is $\Omega(\sqrt{\frac{\epsilon n}{\log(n)}})/(kn) = \Omega(\frac{1}{\sqrt{n \log(n)}})$. Now by definition of $\epsilon$, there are $\Theta(n)$ vertices of degree $k$ in $G$ that were inserted after $G_{\epsilon n}$. Call the vertices in this set that are also connected to $v_\alpha$, $S$. We know $\mathbb{E}(|S|) = \Omega(\frac{1}{\sqrt{n \log(n)}}) \cdot \Theta(n) = \Omega(\sqrt{\frac{n}{\log(n)}})$. We also see that since every vertex in $S$ has degree exactly $k$, their only neighbors are the vertices they connected to upon insertion, so $S$ is independent. For the same reason, $G-S$ is connected, so we have that $v_\alpha$ and $S$ satisfy the necessary conditions of \Cref{theorem:ordering_exists}.
\end{proof}

\subsection{Grid Graph}

The last graph we analyze is the grid graph, i.e., the graph of the lattice points $(i, j)$ with $1\leq i, j\leq k$ with $n=k^2$ and edges connecting two lattice points $(i, j)$ and $(i', j')$ with $|i-i'|+|j-j'|=1$. A grid graph has degree at most $4$ everywhere. Thus there is no random asymptotic truth learning. Below we show that there is a good ordering with which truth learning happens. The main idea is to find an ordering that builds a high quality signal using $o(n)$ vertices, and then propagate this signal to the rest of the network. Since the grid graph has constant degree we cannot use the first phase in our sufficiency condition (\Cref{theorem:ordering_exists}), and have to build that specifically to the grid graph. 

\begin{restatable}{theorem}{GridGraphLearning}\label{thm:gridgraph}
The class of square grid graphs achieves asymptotic truth learning under the Bayesian model.
\end{restatable}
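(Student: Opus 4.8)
The plan is to prove this by hand rather than through \Cref{theorem:ordering_exists}, since a grid has maximum degree $4$ and thus contains no vertex with $\omega(1)$ neighbors. Following the two‑phase strategy used throughout \Cref{sec:sufficient}, I would (i) build a small \emph{aggregation gadget} of $o(n)$ vertices whose output vertex holds a signal matching $\theta$ with probability $1-o(1)$, and then (ii) propagate that signal to the remaining $n-o(n)$ vertices using the monotonicity of the Bayesian model (\Cref{proposition:propagate}). Phase (ii) is immediate once phase (i) is in place, so the entire difficulty concentrates in constructing and analyzing the gadget inside a rigid, bounded‑degree graph.

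For the gadget I would embed a complete binary tree of depth $d=\Theta(\log\log n)$ into a $\sqrt{\log n}\times\sqrt{\log n}$ corner of the grid via the standard recursive H‑tree layout, so that each tree edge is realized by a vertex‑disjoint grid path and adjacent regions are separated by spacing. The spacing guarantees two properties I will need: the chosen vertex set induces exactly the tree (no spurious grid chords), and the leaves form an independent set. Since the H‑tree has total wire length $\Theta(2^d)$, the gadget uses $\Theta(2^{d})=\Theta(\log n)=o(n)$ vertices while its depth still tends to infinity. The decision order $\sigma$ places the leaves first (in any order), then traverses each path from its child end toward its parent end, finishing at the root.

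The quality analysis proceeds by bottom‑up induction on the Bayesian learning rate, with $p_i$ the error at level $i$ and $\epsilon=1-q<\tfrac12$. Leaves have empty earlier‑neighborhoods, so each reports its private signal with error $p_0=\epsilon$. A path‑interior node has a single earlier neighbor, so by \Cref{proposition:propagate} its quality does not decrease along the path. At a branching node $v$, the two incoming path ends $w_1,w_2$ are grid‑neighbors of $v$, so $v$ directly observes $a_{w_1},a_{w_2}$ together with its own signal $s_v$; because the two branches draw on disjoint private signals, $a_{w_1}\perp a_{w_2}\mid\theta$, and both are symmetric signals with error at most $p_i$ by induction. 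Since $v$ can at least take the unweighted majority of these three conditionally independent bits, Bayesian optimality gives $v$ error at most that of the majority rule, yielding the recursion
\[
p_{i+1}=(1-2\epsilon)\,p_i^2+2\epsilon\,p_i,\qquad p_0=\epsilon .
\]
As $f(p)=(1-2\epsilon)p^2+2\epsilon p$ has $f'(0)=2\epsilon<1$ with fixed points only at $0$ and $1$, the iterates converge geometrically to $0$, so the root error is $p_d=(2\epsilon)^{\Theta(d)}=(\log n)^{-\Theta(1)}=o(1)$. Finally I would extend $\sigma$ by a BFS of the grid rooted at the gadget's root; every subsequent vertex then has an earlier neighbor of quality $\ge 1-p_d$, so by \Cref{proposition:propagate} all $n-o(n)$ remaining vertices inherit quality $1-o(1)$, giving $L_\sigma(G)\ge(1-p_d)\bigl(1-o(n)/n\bigr)\to 1$.

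The main obstacle is realizing the branching aggregator inside the bounded‑degree grid while preserving the exact conditional‑independence structure the majority recursion requires: the two branches feeding each node must be vertex‑disjoint and chord‑free so that $a_{w_1}\perp a_{w_2}\mid\theta$, yet the whole gadget must stay $o(n)$ with depth diverging. The H‑tree layout with spacing handles the geometry, and the combination of "Bayesian beats unweighted majority at each branch point" with monotonicity along the copy‑paths handles the analysis; verifying that the spacing simultaneously enforces disjointness, the induced‑tree property, and independent leaves is the delicate bookkeeping step.
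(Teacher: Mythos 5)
Your proposal is correct and matches the paper's proof in essence: both embed a subdivided complete binary tree of depth $\Theta(\log\log n)$ into an $O(\log n)$-vertex corner of the grid, order it bottom-up so that path vertices preserve quality by \Cref{proposition:propagate} and branch vertices aggregate two conditionally independent signals via the same error recursion used for the butterfly network, then propagate the root's $1-o(1)$ quality to the rest of the grid by a traversal. The only difference is the concrete layout of the tree (your H-tree embedding versus the paper's explicit row-indexed vertex sequences), which does not change the argument.
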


\section{Simulations}\label{sec:short-simulation}
We will now provide simulated results with different network topologies and orderings. 
Since the Bayesian model is computationally intensive, in all of our simulations, we utilize the Majority Rule model. Furthermore finding a maximum independent set is NP-hard so we use heuristic ideas to generate an ordering. 
Note that for each run of the simulation on a single network, a new ordering is created, but the underlying graph structure remains the same. 
We repeat the simulation on the same network and take the average network learning rate over multiple runs. Finally, in every simulation we utilize a private signal $q = 0.7$. 

\smallskip\noindent\textbf{Ordering}
We use random ordering as a baseline and compare with an ordering tailored to the network structure.
We first introduce a simple \emph{Two Neighbors} ordering: 
  Choose a node $v$ at random.
  With an arbitrary order, the nodes in $N(v)$ make their decisions based on the majority rule.
  After that, $v$ makes a decision.
  Then nodes with at least two neighbors who have already made decisions will then follow.
  Repeat the above until all nodes are included.

Furthermore, we introduce an improved \emph{Two Neighbor} ordering to incorporate ``high-value'' agents.
  Specifically, we pick the two highest-degree nodes $u$ and $v$.
  Using a greedy algorithm, select a size $m<\deg(u)$ independent set of $u$'s neighbors.
  Have each of the nodes in the subset make a decision and then let $u$ make a decision.
  Label $u$ as a high-value agent.
Do the same for node $v$.
  Take all the nodes $S'$ in the network that has at least two neighbors who have made decisions, where the majority of these neighbors are labeled high-value.
  Have all nodes in $S'$ make a decision.
  Repeat until termination. If there are no vertices which have at least two high value neighbors, we take a random order of the remaining vertices.
Note that the size of the $m$ independent set of neighbors should be large enough to minimize the chance of a negative information cascade, while small enough to allow for further aggregation down the line. 

\begin{table*}[h]
\caption{email-univ Network}
\label{tab:6}
\begin{small}
\begin{minipage}{\columnwidth}
\centering
\begin{tabular}{lccc} \toprule
\multicolumn{4}{l}{Simulation results $q=0.7$, $n=1133$, iterations$=300$} \\ \cmidrule(l){1-4}
\multicolumn{1}{l}{\multirow{2}{*}{Ordering}} & \multicolumn{3}{c}{Average learning rate} \\ \cmidrule(l){2-4}
\multicolumn{1}{l}{} & email-univ & ER $(p=0.008)$ & PA $(k=5)$ \\ \midrule
\multicolumn{1}{l}{Random} & $0.8412$ & $0.8790$ & $0.8778$\\
\multicolumn{1}{l}{Two Neighbors} & $0.8717$ & $0.9443$ & $0.8543$\\
\multicolumn{1}{l}{Two Neighbors + High Value} & $0.9161$ & $0.9633$ & $0.9327$\\
\bottomrule
\end{tabular}
\end{minipage}
\end{small}
\end{table*}

\smallskip\noindent\textbf{Real-world Network}
We run simulations using a dataset of an email network consisting of $1133$ nodes and $5451$ edges~\cite{nr}. The network has a density of $0.0085$, an average degree of $9$, and a maximum degree of $71$. 
We also compare these simulations against an {\ER} model 
and a PA model with the same number of vertices and density.
From Table \ref{tab:6}, both \emph{two neighbors} and \emph{two neighbors + high value} ordering outperform the random ordering in the email-univ network. Furthermore, from Figure \ref{fig:real world}, the two neighbors + high value ordering performs the best for the email-univ network, with a median learning rate close to $1$. While both orderings perform better on average, they are both susceptible to negative information cascades. The shape of the email-univ learning rate distribution resembles the PA model in its polarity of outlier data points. However, the outliers for the email-univ network appear to be more evenly distributed compared to the PA model.

\begin{figure}[htbp]
\centering
\vspace*{-4mm}
\includegraphics[scale = 0.5]{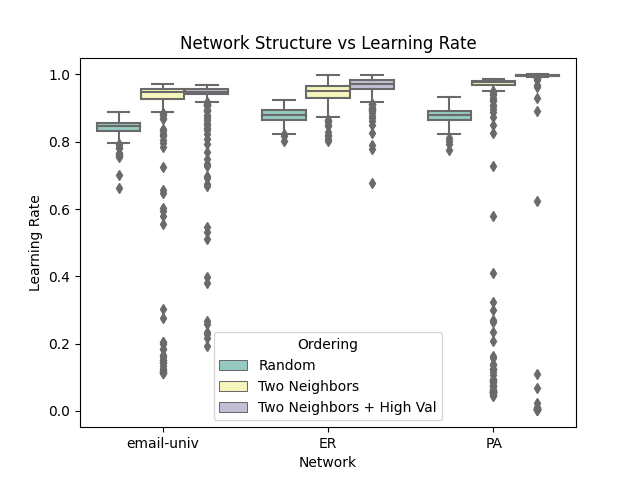}
\caption{Simulation results for the email-univ network with $q=0.7$ over $300$ iterations. }
\label{fig:real world}
\end{figure}

Additional simulation results on the {\ER} graph, PA model and low degree graphs with different network parameters can be found in~\Cref{sec:simulation}.

\section{Conclusions}
Among all the graphs considered in this paper, except a few cases of very sparse or very dense graphs, the quality of learning in a network can vary drastically with different decision orderings. Given an input graph we ask if there exists a decision ordering that supports asymptotic truth learning. We conjecture that this problem is NP-hard.
Furthermore, enabling asymptotic truth learning in graphs of constant degree is generally more challenging.
It is also more difficult to analyze the majority vote model although majority vote model is much easier to implement in practice and achieves good performance if the ordering is chosen right. It is future work to further characterize sparse graphs and majority vote model that have asymptotic truth learning.

\newpage

\appendix
\section{Proofs in \Cref{sec:challenges}}\label{sec:appendix-challenge}

\SparseGraphNoLearning*

\begin{proof}
    We consider a random ordering $\pi$ of the vertices of any graph, $G$ in this family. For a vertex $u$, define $x(u)$ as a Bernoulli random variable with value $1$ if vertex $u$ appears before all its neighbors in the ordering $\pi$. Define $y(u)=\prob{x(u)=1}= 1/(d(u)+1)$, where $d(u)$ is the degree of $u$.
    Now we examine the learning rate of $u$:
    \begin{align*}
        &\ell(u)=\prob{a_u=\theta}\\
        =&\prob{a_u=\theta|x(u)=1}\prob{x(u)=1}+\prob{a_u=\theta|x(u)=0}\prob{x(u)=0}\\
        =&\prob{a_u=\theta|x(u)=1}y(u)+\prob{a_u=\theta|x(u)=0}(1-y(u))\\
        \leq &q\cdot y(u)+1\cdot (1-y(u))=1-(1-q)\cdot y(u)\\
        =& 1-\frac{1-q}{d(u)+1}        
    \end{align*}
    If the average degree in the network is $\Delta$, then at least $n/2$ nodes have degree no greater than $2\Delta$ -- otherwise, there are more than $n/2$ vertices with degree more than $2\Delta$ each, which already gives a total degree more than $n\Delta$, a contradiction. Thus, 
    \begin{align*}
        L(G)\leq & 1- \sum_u \frac{1}{n} \cdot \frac{1-q}{d(u)+1} 
        \leq  1- \sum_{u,\, d(u)\leq 2\Delta}  \frac{1}{n} \cdot \frac{1-q}{d(u)+1} \\
        \leq & 1- \sum_{u,\, d(u)\leq 2\Delta}  \frac{1}{n} \cdot \frac{1-q}{2\Delta+1} 
        \leq  1- \frac{1-q}{2(2\Delta+1)} 
    \end{align*}
Since $L(G)$ is constant away from $1$, asymptotic truth learning does not happen.
\end{proof}

\ButterFly*

\begin{proof}
Due to symmetry in the network the learning quality of vertices of the same depth is the same. We denote the learning quality of a vertex on depth $i$ to be $q_i$, $1\leq i\leq k+1$. Vertices of depth one make decisions purely from their individual observation, $q_1=q$.

Consider a node $v$ on depth $i$. $v$ has two neighbors $u, w$ in depth $i-1$, which by design, can be viewed as children of $v$ in a binary tree. The actions of $u, w$ are decided by their individual observations as well as the actions of their children in a bottom-up ordering. The actions of $u, w$ are independent of each other.
Therefore, $v$ learns successfully if 1) both $u, w$ predicted correctly; 2) or when $u, w$ have different predictions but $v$'s own observation is correct. This is true for both the Bayesian model and the majority rule model. Therefore,
\begin{align*}
    q_i &= q_{i-1}^2+q\cdot 2q_{i-1}(1-q_{i-1}) 
    = q_{i-1}[1+(1-q_{i-1})(2q-1)]
\end{align*}
The term $(1-q_{i-1})(2q-1) > 0$ for $1/2 < q < 1$, which means that $q_i > q_{i-1}$ and the learning rate increases monotonically as we move up in depth. 

Solving this recurrence is a bit involved, so instead, we can simply bound $q_i$. Let $q_i = 1-\epsilon_i$. It follows that $\epsilon_{i} < \epsilon_{i-1} < 1-q$. Plugging in and rearranging, we get
\begin{align*}
    \epsilon_i &= (2q-1)\epsilon_{i-1}^2+2(1-q)\epsilon_{i-1} < (2q-1)(1-q)\epsilon_{i-1}+2(1-q)\epsilon_{i-1}\\
    &
    = (2q+1)(1-q)\epsilon_{i-1}
\end{align*}
One can verify that $0 < (2q+1)(1-q) < 1$. As a result, $\epsilon_i$ will shrink exponentially fast as $i$ grows: $$\epsilon_i < [(2q+1)(1-q)]^{i-1}\epsilon_1=[(2q+1)(1-q)]^{i-1}(1-q).$$ 

Now we can calculate the network learning rate:
\begin{align*}
L(G) &= \frac{1}{k+1}\sum^{k+1}_{i=1}q_i \geq \frac{1}{1+ k} \sum^k_{i=0} [1-((2q+1)(1-q))^i(1-q)] \\
&= 1 - \frac{1-q}{1+k} \sum^k_{i=0} [(2q+1)(1-q)]^i \\
& =1 - \frac{1-q}{1+k} \cdot \frac{1-[(2q+1)(1-q)]^{k+1}}{1-(2q+1)(1-q)}
\end{align*}
When $n\to \infty$, $k\to \infty$, $L(G)\to 1$.
\end{proof}

\section{Proofs in \Cref{sec:sufficient}}\label{appendix:proof-sufficient}


\BayesianRate*

To prove this proposition rigorously, we first prove a lemma. To do so, we introduce some notation to simplify the following section. Let the vector of random variables, $J_k = (p_1, p_2, \cdots, p_k)$ be the private signals of a learning network $(G, q, \sigma)$ up to $v_k$ in the ordering. Then let $I \in \{0, 1\}^k$ be a vector of binary values that $J$ can possibly take on. Similarly, define $B_k$ as the vector of random variables $(a_i| v_i \in N_\sigma(v_k), p_k)$. $B_k$ here represents all information $v_k$ has access to before making its decision $a_k$. Let $A \in \{0, 1\}^{|N_\sigma(v_k)| + 1}$ be a vector of binary values that $B_k$ can possibly take on. Note that the value of $J_k$ completely determines the value of $B_k$, which in turn determines $a_k$. In other words, if $J_k = I$, then there exists some $A$ such that $\mathbb{P}(B_k = A|J_k = I) = 1$ and $\mathbb{P}(a_k = C|B_k = A) = 1$ for some $C \in \{0,1\}$.
We also define $I^c$ to be $I$ with every value flipped, similarly for $A$. Lastly, let $\mathbb{P}(J = I|\truth = C)$ mean the probability that the random variables in $J$ attain the values in $I$ given that the probability of a private signal giving $C$ is $q$ and $1 - C$ is $1-q$.

\begin{restatable}{lemma}{BayesianComplement}\label{lemma:flip}
Given any learning network $(G, q, \sigma)$ under the Bayesian model, $J_k = I$ determines $a_{k} = 0$, if and only if $J_k = I^c$ determines $a_{k} = 1$. The same is true for $B_k$, if $B_k = A$ determines $a_k = 0$, then $B_k = A^c$ determines $a_k = 1$.
\end{restatable}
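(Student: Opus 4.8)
The plan is to prove both assertions simultaneously by induction on the position $k$ of $v_k$ in the order $\sigma$, with the whole argument driven by a single \emph{flip symmetry} of the generative model. Write $a_j(I)$ and $B_k(I)$ for the (deterministic) action and information vector produced under a signal profile $I$; since the $\sigma$-neighbors of $v_k$ all precede it, both are genuine functions of $J_k$, consistent with the remark that $J_k = I$ determines $B_k$ and $a_k$. The key preliminary observation is a coordinatewise identity for the signal distribution: because $\theta$ is uniform and each $p_i$ equals $\theta$ with probability $q$ independently, one has $\mathbb{P}(J_k = I \mid \truth = 1) = \mathbb{P}(J_k = I^c \mid \truth = 0)$ for every profile $I \in \{0,1\}^k$. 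Indeed, coordinate $i$ contributes $q$ on the left exactly when $I_i = 1$, and on the right (where the relevant signal value is $1-I_i$ matched against $\truth = 0$) it likewise contributes $q$ precisely when $I_i = 1$; the factors agree coordinatewise, and independence gives the full product.

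For the base case $k=1$, the agent $v_1$ has no earlier neighbors, so the Bayesian rule outputs $a_1 = p_1$; hence $a_1(I) = I_1$ and $a_1(I^c) = 1 - I_1$, which gives both claims at $k=1$. For the inductive step I would assume that for every $j < k$ and every profile flipping the signals flips the action, i.e. $a_j(I^c) = 1 - a_j(I)$. Combining this with the flip of $p_k$ itself yields $B_k(I^c) = B_k(I)^c$, equivalently $B_k(I) = A \iff B_k(I^c) = A^c$.

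The crux is then the posterior identity $\mathbb{P}(\theta = 1 \mid B_k = A^c) = \mathbb{P}(\theta = 0 \mid B_k = A)$. To obtain it I would write $\mathbb{P}(B_k = A^c \mid \theta = 1) = \sum_{I \,:\, B_k(I) = A^c} \mathbb{P}(J_k = I \mid \truth = 1)$, reindex the sum by the involution $I \mapsto I^c$ (using $B_k(I^c) = B_k(I)^c$), and apply the preliminary identity termwise to get $\mathbb{P}(B_k = A^c \mid \theta = 1) = \mathbb{P}(B_k = A \mid \theta = 0)$, together with its mirror image $\mathbb{P}(B_k = A^c \mid \theta = 0) = \mathbb{P}(B_k = A \mid \theta = 1)$. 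Since the prior is uniform these also give $\mathbb{P}(B_k = A^c) = \mathbb{P}(B_k = A)$, and Bayes' rule then delivers the posterior identity. Consequently, if $B_k = A$ forces $a_k = 0$ (the posterior for $0$ dominating that for $1$), then on $A^c$ the posterior for $1$ dominates, so $a_k = 1$; combined with $B_k(I)=A \Rightarrow B_k(I^c)=A^c$ this simultaneously yields $a_k(I^c) = 1 - a_k(I)$, closing the induction and establishing both the $B_k$- and $J_k$-statements at level $k$.

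The main obstacle I anticipate is exactly this posterior symmetry, which is where the two halves of the lemma intertwine: proving the $B_k$-statement at level $k$ requires the action-flipping ($J_k$) statement at all earlier levels, so the clean resolution is the joint induction above rather than attacking either half in isolation. A secondary technical point is tie-breaking, since the word ``determines'' presupposes a deterministic decision rule; I would fix a complementation-symmetric convention (for instance, on an exact posterior tie defer to the private signal $p_k$, which itself flips under complementation) so that the transfer of the inequality $\mathbb{P}(\theta=0 \mid B_k = A) \ge \mathbb{P}(\theta=1 \mid B_k = A)$ to $\mathbb{P}(\theta=1 \mid B_k = A^c) \ge \mathbb{P}(\theta=0 \mid B_k = A^c)$ remains exact even in the degenerate tie case.
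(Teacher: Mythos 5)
Your proof follows essentially the same route as the paper's: an induction on $k$ whose hypothesis yields $B_k(I^c)=B_k(I)^c$, followed by a $0$/$1$ label-swapping symmetry that reverses the Bayesian posterior inequality and hence flips $a_k$. The only difference is that you justify the symmetry step explicitly (via the coordinatewise identity $\mathbb{P}(J_k=I\mid\theta=1)=\mathbb{P}(J_k=I^c\mid\theta=0)$ and reindexing the likelihood sum by the involution $I\mapsto I^c$) and you flag the tie-breaking convention, both of which the paper's version of this step leaves implicit; these are refinements, not a different argument.
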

\begin{proof}
First, we just need to prove the statement for $J$ since every value of $B_k$ is induced by some value of $J$. Now, we can prove the statement of \Cref{lemma:flip} by induction on $k$. The base case, $k=1$ is trivial. Now, let's assume it's true for $1,...k-1$ and look at $v_k$. First, if $|N_{\sigma}(v_{k})| = 0$, this is almost the same as the base case. Clearly $v_{k}$ just adopts whatever its private signal gives it, so for any $I$, $I^c$ always flips the $k$th private signal which would flip $a_{k}$. If $N_{\sigma}(v_{k})$ is not empty, we first know that because $J = I$, $B_k = A$ and $a_k = C$ for some $A$ and $C$. If $C = 0$, by definition we have:
\[\mathbb{P}(\truth = 0|B_k = A) > \mathbb{P}(\truth = 1|B_k = A)\]
Using Bayes' Rule and the fact that $\theta$ is chosen from $\{0,1\}$ with equal probability:
\[\mathbb{P}(B_k = A|\truth = 0) > \mathbb{P}(B_k = A|\truth = 1)\]
$0$ and $1$ are just arbitrary labels, so if we swap them entirely in the inequality, the value remains unchanged:
\[\mathbb{P}(B_k = A^c|\truth = 1) > \mathbb{P}(B_k = A^c|\truth = 0)\]
Now we can use the induction hypothesis to claim $J^c$ induces $B_k = A^c$ for all elements in $N_{\sigma}(v_{k})$. Then, the inequality above implies given $J^c$, $a_{k} = 1$.
\end{proof}
\begin{proof}[Proof of \Cref{proposition:propagate}]
Let $v_\beta$ be the vertex that achieves max learning rate in $N_{\sigma}(v_\alpha)$. We can write:
\[\ell(v_\alpha) = \ell(v_\beta) - \mathbb{P}(a_\beta = \theta, a_\alpha = 1-\theta) + \mathbb{P}(a_\beta = 1- \theta, a_\alpha = \theta)\]
Let $O$ be a set of vectors, $A$, such that if $B_\alpha = A$, then $a_\beta = 1-\theta$ and $a_\alpha = \theta$, so clearly:
\[\mathbb{P}(B_\alpha = A, A \in O) = \mathbb{P}(a_\beta = 1-\theta, a_\alpha = \theta)\]
Then by \Cref{lemma:flip}, we have that if $O^c = \{A^c|A \in O\}$:
\[\mathbb{P}(B_\alpha = A, A \in O^c) = \mathbb{P}(a_\beta = \theta, a_\alpha = 1-\theta)\]
Now let's rewrite $\mathbb{P}(B_\alpha = A, A \in O)$, the same can be applied for $\mathbb{P}(B_\alpha = A, A \in O^c)$:
\[\mathbb{P}(B_\alpha = A, A \in O) = \sum_{A \in O}\mathbb{P}(B_\alpha=A) = \sum_{A \in O}\mathbb{P}(B_\alpha=A|\truth=\theta)\]
Then, for each $A \in O$, we have from the fact that it induces $a_\alpha = \theta$:
\[\mathbb{P}(\truth = \theta|B_\alpha=A) > \mathbb{P}(\truth = 1-\theta|B_\alpha=A)\]
Again, using Bayes' rule and the fact that $\theta$ is chosen from $\{0,1\}$ with equal probability:
\[\mathbb{P}(B_\alpha=A|\truth =\theta) > \mathbb{P}(B_\alpha=A|\truth = 1-\theta)\]
Next, we can flip all $0$ and $1$ for the right side:
\[\mathbb{P}(B_\alpha = A|\truth = \theta) > \mathbb{P}(B_\alpha=A^c|\truth=\theta)\]
Then summing over $A \in O$:
\[\sum_{A \in O}\mathbb{P}(B_\alpha=A) > \sum_{A \in O}\mathbb{P}(B_\alpha = A^c)\]
\[\mathbb{P}(a_\beta = 1- \theta, a_\alpha = \theta) > \mathbb{P}(a_\beta = \theta, a_\alpha = 1-\theta)\]
Therefore, $\ell(v_\alpha) \geq \ell(v_\beta)=\max \{\ell(v_i)| v_i \in N_{\sigma}(v_\alpha)\}$.
\end{proof}

\Erdoslowdegree*

We first state the known mean and variance of the number of isolated vertices in this class of {\ER} graphs and then use that to prove the above theorem.
\begin{proposition}[\cite{Hofstad_2016}, Proposition 5.9]\label{proposition:isolated}
    If $\frac{1}{2n} \leq p \leq \frac{1}{2}$, then the number of isolated vertices in $G(n, p)$ is a random variable, $Y$ with the following expectation and variance:
    \begin{equation}
        \mathbb{E}(Y) = ne^{-pn}(1+\Theta(p^2n))
    \end{equation}
    \begin{equation}
        \Var(Y) \leq \mathbb{E}(Y) + \frac{pn}{n-pn}\mathbb{E}(Y)^2
    \end{equation}
\end{proposition}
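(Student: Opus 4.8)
The plan is to realize $Y$ as a sum of indicators and compute both moments directly. Write $Y = \sum_{i=1}^n X_i$, where $X_i$ is the indicator that vertex $v_i$ is isolated. A fixed vertex is isolated exactly when all $n-1$ edges incident to it are absent, an event of probability $(1-p)^{n-1}$, so by linearity $\mathbb{E}(Y) = n(1-p)^{n-1}$. To extract the claimed form I would compare this with $ne^{-pn}$ through the logarithm: since $\log(1-p) = -p - \tfrac{p^2}{2} - \tfrac{p^3}{3} - \cdots$ converges for $p \le \tfrac12$, I can write $(1-p)^{n-1} = e^{-pn}\exp\!\big(p - (n-1)(\tfrac{p^2}{2} + \tfrac{p^3}{3} + \cdots)\big)$. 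In the stated range the exponent is governed by the $p^2$ term (the higher powers are dominated because $p \le \tfrac12$, and the isolated $+p$ term satisfies $p = O(np^2)$ since $p \ge \tfrac{1}{2n}$ forces $np \ge \tfrac12$), so the correction factor has the form $1 + \Theta(p^2 n)$, giving $\mathbb{E}(Y) = ne^{-pn}(1+\Theta(p^2 n))$.

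For the variance I would use the standard second-moment decomposition
\[ \Var(Y) = \sum_{i=1}^n \Var(X_i) + \sum_{i \neq j} \operatorname{Cov}(X_i, X_j). \]
Each $X_i$ is Bernoulli, so $\Var(X_i) = \mathbb{E}(X_i) - \mathbb{E}(X_i)^2 \le \mathbb{E}(X_i)$, and summing gives $\sum_i \Var(X_i) \le \mathbb{E}(Y)$, which produces the first term on the right-hand side of the claimed bound. The key computation is the covariance of two distinct vertices. Vertices $v_i$ and $v_j$ are simultaneously isolated exactly when every edge incident to either one is absent; the edge sets incident to $v_i$ and to $v_j$ share only the single edge $\{v_i, v_j\}$, so their union has $(n-1) + (n-1) - 1 = 2n-3$ edges. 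Hence $\mathbb{E}(X_i X_j) = (1-p)^{2n-3}$ while $\mathbb{E}(X_i)\mathbb{E}(X_j) = (1-p)^{2n-2}$, and therefore $\operatorname{Cov}(X_i, X_j) = (1-p)^{2n-3} - (1-p)^{2n-2} = p(1-p)^{2n-3}$, which is positive.

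Summing over the $n(n-1)$ ordered pairs and recalling $\mathbb{E}(Y)^2 = n^2(1-p)^{2n-2}$, I would close the argument with the comparison
\begin{align*}
\sum_{i \neq j}\operatorname{Cov}(X_i, X_j) &= n(n-1)\,p(1-p)^{2n-3} \le n^2 p (1-p)^{2n-3} \\
&= \frac{p}{1-p}\,n^2 (1-p)^{2n-2} = \frac{pn}{n-pn}\,\mathbb{E}(Y)^2,
\end{align*}
which together with the variance contribution $\le \mathbb{E}(Y)$ yields the stated inequality. The variance side is an exact, self-contained calculation whose only subtlety is the correct $2n-3$ edge count and the observation $n(n-1)\le n^2$; the real work is on the expectation side. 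There the main obstacle is controlling the logarithmic expansion carefully enough to certify that the multiplicative correction is genuinely $\Theta(p^2 n)$ uniformly over $\tfrac{1}{2n} \le p \le \tfrac12$, keeping track of both the sign and the competing $+p$ and $-(n-1)p^2/2$ contributions rather than settling for a one-sided $O(p^2 n)$ estimate.
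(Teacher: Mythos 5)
Your variance half is exact, complete, and is the standard argument (note the paper itself gives no proof of this statement --- it is quoted as Proposition 5.9 of \cite{Hofstad_2016}, whose proof is the same indicator decomposition you use): the edge count $2n-3$, the covariance $p(1-p)^{2n-3}$, the bound $n(n-1)\le n^2$, and the identity $\frac{p}{1-p}=\frac{pn}{n-pn}$ all check out. The expectation formula $\mathbb{E}(Y)=n(1-p)^{n-1}$ and your one-sided estimate are also right: writing $(1-p)^{n-1}=e^{-pn}e^{E}$ with $E=p-(n-1)\sum_{k\ge 2}p^k/k$, your observations that $p\le 2np^2$ on this range and $\sum_{k\ge 2}p^k/k\le p^2$ for $p\le\frac12$ do give $|e^{E}-1|=O(p^2n)$ uniformly.

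The genuine gap is the final task you set yourself: the two-sided bound you say remains to be certified --- that the correction is ``genuinely $\Theta(p^2n)$ uniformly over $\frac{1}{2n}\le p\le\frac12$'' --- cannot be certified, because it is false. The exponent $E$ changes sign inside the allowed range: near $p\approx 2/n$ the competing $+p$ and $-(n-1)p^2/2$ contributions you mention cancel, and by continuity there is a $p^*$ of order $1/n$ with $E(p^*)=0$ exactly, where the multiplicative correction is $0$ while $p^2n=\Theta(1/n)>0$; this kills the lower half of any $\Theta$ bound. At the other end, for constant $p$ (say $p=\frac12$) one has $(1-p)^{n-1}/e^{-pn}=e^{-\Theta(n)}$, so the correction tends to $-1$, which is $\Theta(1)$, not $\Theta(p^2n)=\Theta(n)$. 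So only $\mathbb{E}(Y)=ne^{-pn}\left(1+O(p^2n)\right)$ holds uniformly; this matches the source (van der Hofstad states the error as $O(\lambda^2/n)$ with $\lambda=np$, and $\lambda^2/n=p^2n$), and it is all the paper ever uses --- in the proof of \Cref{proposition:ER-low} only the regime $pn=O(1)$ matters, where $p^2n=O(1/n)$ and $\mathbb{E}(Y)=\Theta(n)$ follows under either reading. In short: your computations constitute the correct (and the cited) proof, but you should read the $\Theta$ in the transcribed statement as an $O$ rather than attempt the two-sided version, which no amount of care with the logarithmic expansion will rescue.
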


\begin{proof}[Proof of \Cref{proposition:ER-low}]
We will show that, with high probability, there does not exist a decision ordering for each graph in this family. 
First, consider when $p(n) \geq \frac{1}{2n}$. Then, we use Proposition~\ref{proposition:isolated} and let $Y$ be the random variable representing the number of isolated vertices in $G$. The proposition tells us with $pn = O(1)$, $\mathbb{E}(Y) = \Theta(n)$, $\Var(Y) = O(n)$. Then, there exists a constant $C$ such that $C < \mathbb{E}(Y)/n$ and $C < 1- \mathbb{E}(Y)/n$. Using Chebyshev's Inequality, we get:
\begin{equation}
    \mathbb{P}(|Y - \mathbb{E}(Y)| \geq Cn) \leq \frac{\Var(Y)}{(Cn)^2} = O(\frac{1}{n})
\end{equation}
This means that $Y = \Theta(n)$ with high probability. Each isolated vertex has learning rate $q$, and since there are $\Theta(n)$ of them with high probability, we have the error rate must be at least $K(1-q)$ for some constant $K$ regardless of the ordering. Thus, the learning rate cannot approach 1 and truth learning is not achieved. For the case that $p < \frac{1}{2n}$, the probability that $Y = \Theta(n)$ can only be higher, so we have the same result.
\end{proof}

\ErdosLargeComponent*

To prove this, we'll use a theorem from \cite{Hofstad_2016}:

\begin{theorem}[\cite{Hofstad_2016}, Theorem 4.8]\label{theorem:giant-component}
    Fix $p > \frac{1}{n}$ for an {\ER} graph of size $n$. Then, for every $\nu \in (\frac{1}{2}, 1)$, there exists $\delta = \delta(\nu, p) > 0$ such that:
    \begin{equation}
        \mathbb{P}(|\mu - \eta_p| \geq n^{\nu-1}) = O(n^{-\delta})
    \end{equation}
    where $\mu$ is the proportion of vertices not in the giant component and $\eta_p$ satisfies:
    \begin{equation}
        \eta_p = e^{pn(\eta_p - 1)}
    \end{equation}
\end{theorem}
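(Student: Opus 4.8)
The plan is to recognize $\eta_p$ as the extinction probability of a Poisson Galton--Watson process and then prove concentration of $\mu$ by the method of moments. Writing $\lambda = pn$, the defining equation $\eta_p = e^{pn(\eta_p-1)} = e^{\lambda(\eta_p - 1)}$ identifies $\eta_p$ as the smallest fixed point of the generating function of a $\mathrm{Poisson}(\lambda)$ offspring distribution; let $\zeta = 1 - \eta_p$ be the corresponding survival probability. The global strategy has three parts: (i) couple the cluster of a fixed vertex, explored by breadth-first search, to this branching process, so that the probability a given vertex lies in a ``small'' cluster is $\eta_p$ up to a controllably small error; (ii) a first-moment computation showing $\mathbb{E}[\mu]$ is within $\tfrac12 n^{\nu-1}$ of $\eta_p$; and (iii) a second-moment (variance) bound feeding into Chebyshev's inequality to obtain the claimed $O(n^{-\delta})$ deviation probability.

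For the coupling I would explore the component $C_v$ of a vertex $v$ one vertex at a time, maintaining sets of active, explored, and untouched vertices. At each step the number of newly discovered neighbors is $\mathrm{Bin}(r, p)$, where $r$ is the current number of untouched vertices. As long as the exploration has revealed at most $t_n$ vertices, $r \in [n - t_n, n]$, so this offspring count is sandwiched between $\mathrm{Bin}(n - t_n, p)$ and $\mathrm{Bin}(n, p)$, both converging to $\mathrm{Poisson}(\lambda)$. Choosing a truncation threshold $k_n = \Theta(\log n)$ to separate ``small'' clusters from the giant, the probability that the exploration dies out before reaching size $k_n$ matches the branching-process extinction probability $\eta_p$ up to an error I would bound by $o(n^{\nu-1})$ using the depletion estimate together with the exponential tail of the branching process. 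This yields $\mathbb{P}(|C_v| < k_n) = \eta_p + o(n^{\nu-1})$.

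Next, let $Z = |\{v : |C_v| < k_n\}|$ count vertices in small clusters, so that $\mu \approx Z/n$. The first moment is immediate from the coupling, $\mathbb{E}[Z] = n(\eta_p + o(n^{\nu-1}))$. For the variance I would write $\Var(Z) = \sum_{u,v}\big(\mathbb{P}(|C_u|,|C_v| < k_n) - \mathbb{P}(|C_u| < k_n)\,\mathbb{P}(|C_v| < k_n)\big)$ and argue that the covariance of the two small-cluster indicators is small for almost all pairs: conditioned on $C_u$ being small (hence touching $O(k_n)$ vertices and their incident edges), the residual graph seen by $v$ is essentially another \ER\ graph on $n - O(k_n)$ vertices, so the two events decorrelate with an error of order $k_n/n$ per pair. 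This gives $\Var(Z) = O(n\,\mathrm{poly}(\log n))$, and Chebyshev then yields $\mathbb{P}(|Z/n - \mathbb{E}[Z]/n| \ge \tfrac12 n^{\nu-1}) = O(\Var(Z)/n^{2\nu}) = O(n^{1-2\nu}\,\mathrm{poly}(\log n))$, which is $O(n^{-\delta})$ for any $\delta < 2\nu - 1$ since $\nu > \tfrac12$.

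The remaining gap is to pass from ``$Z/n$ concentrates'' to the statement about $\mu$, the proportion outside the \emph{largest} component, and this is where I expect the main difficulty. I must rule out clusters of intermediate size, i.e.\ show that with high probability no component has size in $[k_n, \epsilon n]$ and that the giant is unique, so that every non-small vertex lies in the single giant component and $Z/n$ and $\mu$ differ by only $o(n^{\nu-1})$. I would establish this by a first-moment bound: using the branching-process tail, the expected number of vertices in components of size in $[k_n, \epsilon n]$ is polynomially small once $k_n = \Theta(\log n)$ is large enough, and a sprinkling / two-round-exposure argument gives uniqueness of the giant. The delicate points throughout are controlling the depletion error in the coupling tightly enough to beat the scale $n^{\nu-1}$, and making the pairwise decorrelation in the variance bound quantitative; these, rather than the branching-process heuristics, are the real work.
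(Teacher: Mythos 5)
The paper does not prove this statement at all --- it is imported verbatim as Theorem~4.8 of van der Hofstad's book \cite{Hofstad_2016}, so the only meaningful comparison is with the proof in that cited reference. Your blind outline is essentially that proof: counting vertices in small clusters, coupling the breadth-first exploration to a $\mathrm{Poisson}(\lambda)$ branching process with $\eta_p$ its extinction probability to control the first moment up to depletion error $O(\log^2 n/n) = o(n^{\nu-1})$, a pairwise-decorrelation variance bound of order $n\,\mathrm{poly}(\log n)$ fed into Chebyshev (giving any $\delta < 2\nu-1$), and a no-middle-ground plus two-round-exposure uniqueness step to identify the non-small vertices with the single giant component --- all consistent with the structure of the argument in \cite{Hofstad_2016} and with the stated bound.
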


\begin{proof}[Proof of \Cref{lemma:ER-large-component}]
First, it's well known that {\ER} graphs with $p > \frac{\log(n)}{n}$ are connected with high probability, so we may assume otherwise. Clearly, the values of $p$ we are considering satisfy the conditions of \Cref{theorem:giant-component} and we can just pick $\eta = \frac{3}{4}$. Now let's show that as $n \to \infty$, $\eta_p \to 0$. We intend to use contradiction, so let's assume otherwise. First, let's take the partial derivative of both sides with respect to $\eta_p$:

    \[\frac{\partial \eta_p}{\partial \eta_p} = 1 \quad \quad \quad \frac{\partial e^{pn(\eta_p - 1)}}{\partial \eta_p} = pe^{pn(\eta_p - 1)}\]

Now clearly, $1 \geq pe^{p(\eta_p - 1)}$, so as $n$ increases, $pn$ increases since $p = \frac{\omega(1)}{n}$, and that would make $\eta_p$ decrease. Since $\eta_p$ monotonically decreases and $\eta_p > 0$, we have $\eta_p$ must converge to some constant $\epsilon > 0$. However, if we plug this back into the equation and take $n \to \infty$:
\[
    \epsilon = e^{\omega(1)(\epsilon - 1)}
\]
Clearly, the right hand side goes to 0 while the left hand side doesn't, so we have a contradiction. Thus, $\eta_p = o(1)$. Then, we have plugging in $\eta_p$ and $\nu = \frac{3}{4}$ into the theorem:
\begin{equation}
    \mathbb{P}(|\mu - \eta_p| \geq n^{-\frac{1}{4}}) = O(n^{-\delta})
\end{equation}
Since every single term other than $\mu$ has been shown to be $o(1)$, we have with high probability that $\mu = o(1)$ and the size of the giant component is $n - o(n)$.
\end{proof}

\ErdosRenyiConstantP*

\begin{proof}
    We calculate the probability that herding happens on any graph $G$ in this family. Suppose the first two vertices in $\sigma$ with probability $(1-q)^2$ have the wrong observation signal. If every vertex $v_i$ starting from the third one, $i\geq 3$, has at least two neighbors that arrive before $v_i$, regardless of the private signal of $v_i$, with both the Bayesian model and majority rule model, $v_i$ has the wrong prediction. We calculate the probability for this to happen.
    
    For any $3\leq i\leq n$, the probability that $v_i$ connects to at least two vertices that arrive earlier is
    \begin{align*}
p_i &= 1- {i-1 \choose 1} \cdot p (1-p)^{i-2} - (1-p)^{i-1}=1-(1-p)^{i-2}(ip+1-2p).
\end{align*}
Thus, by using $(1-x)(1-y)\geq 1-(x+y)$, when $0<x, y<1$, we have
    \begin{align*}
\prod_{i=3}^{n} p_i &= \prod_{i=3}^{n} \left(1-(1-p)^{i-2}(ip+1-2p)\right)\\
&\geq \left(\prod_{i=3}^{k-1} (1-(1-p)^{i-2}(ip+1-2p))\right)\cdot \left(1-\sum_{i=k}^n (1-p)^{i-2}(ip+1-2p)\right)
\end{align*}
We take $n\to\infty$,
    \begin{align*}
\prod_{i=3}^{n} p_i &\geq \left(\prod_{i=3}^{k-1} (1-(1-p)^{i-2}(ip+1-2p))\right)\cdot \left(1-(1-p)^{k-2}(\frac{2}{p}+k-3)\right)
\end{align*}
We take $k\geq 3$ to be a constant such that $0<(1-p)^{k-2}(\frac{2}{p}+k-3)<1$. Since $(1-p)^{k-2}$ decreases much faster than the increase of $\frac{2}{p}+k-3$ as $k$ increases, the inequality can be achieved with a constant $k$. In that case we have $\prod_{i=3}^{n} p_i \geq q'$ with $q'$ as a constant in $(0, 1)$.
Therefore with probability at least $(1-q)^2q'$ herding happens. That is, there is no asymptotic truth learning.
\end{proof}

\ErdosRenyiComplementGreater*
To prove this, we will first prove the following lemma:
\begin{lemma}\label{lemma:large-clique}
A family of {\ER} graphs with parameter $p$ satisfying $p = \frac{\omega(n^\epsilon)}{n}$ for any $1 > \epsilon > 0$ contains with high probability a clique of size $\omega(1)$.
\end{lemma}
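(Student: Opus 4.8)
The plan is to deduce the statement from the classical fact that the clique number of $G(n,p)$ diverges in this range of $p$. The first step is to rewrite the hypothesis in a workable form: the assumption $p = \frac{\omega(n^\epsilon)}{n}$ \emph{for every} $\epsilon \in (0,1)$ is equivalent to $\log(1/p) = o(\log n)$ (equivalently $p = n^{-o(1)}$). Indeed, for each fixed $\epsilon$ the hypothesis gives $p \ge n^{\epsilon-1}$ eventually, hence $\log(1/p) \le (1-\epsilon)\log n$; letting $\epsilon \to 1$ forces $\log(1/p)/\log n \to 0$. If $p$ is bounded below by a constant the conclusion is immediate and classical, so I would assume $p \to 0$ and $\log(1/p) \ge 1$.

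Second, I would run a first/second moment argument on the number $N_k$ of $k$-cliques, targeting the size $k = k(n) := \lfloor \log_{1/p} n \rfloor = \lfloor \log n/\log(1/p)\rfloor$, which satisfies $k = \omega(1)$ precisely because $\log(1/p) = o(\log n)$. Using $\binom{n}{k} \ge (n/k)^k$ and $k\log(1/p) \le \log n$, one gets
\[
\log \mathbb{E}(N_k) = \log\!\binom{n}{k} - \binom{k}{2}\log(1/p) \;\ge\; k\log n - k\log k - \tfrac{k}{2}\log n = k\Big(\tfrac12\log n - \log k\Big).
\]
Since $k \le \log n$ gives $\log k \le \log\log n = o(\log n)$, the bracket tends to infinity, so $\mathbb{E}(N_k)\to\infty$. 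This places $k$ at roughly half of the first-moment threshold $2\log_{1/p}n$, the regime in which the correlations between overlapping cliques are controllable.

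Third, to upgrade $\mathbb{E}(N_k)\to\infty$ into existence with high probability, I would apply the second moment method via Chebyshev, bounding
\[
\frac{\mathbb{E}(N_k^2)}{\mathbb{E}(N_k)^2} = \sum_{j=0}^{k} \frac{\binom{k}{j}\binom{n-k}{k-j}}{\binom{n}{k}}\, p^{-\binom{j}{2}},
\]
and showing that the terms with $j \ge 2$ sum to $o(1)$, so the ratio tends to $1$ and $\mathbb{P}(N_k=0)\to 0$. The main obstacle is exactly this tail: the factor $p^{-\binom{j}{2}}$ grows in $j$, and one must verify that it is dominated by the combinatorial factor $\binom{k}{j}\binom{n-k}{k-j}/\binom{n}{k}\approx (k^2/n)^j$ uniformly over $2 \le j \le k$; the generous slack from choosing $k$ well below the threshold is what renders each such term negligible. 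An equivalent route is Janson's inequality, which bounds $\mathbb{P}(N_k=0)$ directly by the same overlap sum. Finally, since all that is asserted is the divergence of the clique number, one may instead invoke the classical concentration results for $\omega(G(n,p))$ (e.g.\ Grimmett--McDiarmid \cite{Grimmett_McDiarmid_1975}), which yield $\omega(G(n,p)) = (2+o(1))\log_{1/p} n = \omega(1)$ in this range and bypass the explicit second-moment computation.
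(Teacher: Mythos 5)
Your proposal is correct and follows essentially the same route as the paper: the paper re-runs the Grimmett--McDiarmid argument for non-constant $p$, targets exactly your $k = d(n) = \log n/\log(1/p)$, and bounds $\mathbb{P}(K_n < d(n)) \le \sum_i T_i(n) - 1$, where $\sum_i T_i(n) = \mathbb{E}(N^k_n\mid A)/\mathbb{E}(N^k_n) = \mathbb{E}\big((N^k_n)^2\big)/\mathbb{E}(N^k_n)^2$ is precisely your second-moment overlap sum, so the Chebyshev phrasing and the conditional-expectation phrasing give the identical bound. Your explicit reduction of the hypothesis to $\log(1/p) = o(\log n)$ is exactly the reading the paper's proof implicitly relies on when it asserts $d(n) = \omega(1)$.
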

\begin{proof}
This proof just re-analyzes some of the complexity in the proof given in~\cite{Grimmett_McDiarmid_1975} to prove this same statement for constant $p$. The proof is mostly re-iterated here for completeness. Let $K_n$ be a random variable for max clique size and $N_n^k$ be the random variable for the number of $k$-cliques for an {\ER} graph of size $n$. First, the following are true:
\begin{equation}\label{equation:gm-lemma}
    \frac{\mathbb{E}(N^k_n)}{\mathbb{E}(N^k_n|A)} \leq \mathbb{P}(K_n \geq k) \leq \mathbb{E}(N^k_n),
\end{equation}
where $A$ is the event that the first $k$ vertices inserted into the {\ER} graph form a complete graph. Here,
\begin{equation}\label{equation:expected-clique}
    \mathbb{E}(N^k_n) = \binom{n}{k}p^{\frac{1}{2}k(k-1)}
\end{equation}
\begin{equation}\label{equation:expected-clique-conditional}
\mathbb{E}(N^k_n|A) = \sum_{i=m}^k \binom{k}{i}\binom{n-k}{k-i}p^{\frac{1}{2}k(k-1)-\frac{1}{2}i(i-1)}
\end{equation}
where $m = \max(0, \{k-\frac{1}{2}n\})$.
The first inequality of \Cref{equation:gm-lemma} can be seen easily through the following:
\[ \mathbb{E}(N^k_n) = \mathbb{E}(N^k_n|K_n \geq k) \mathbb{P}(K_n \geq k) \leq {\mathbb{E}(N^k_n|A)} \mathbb{P}(K_n \geq k) \]

Now define:
\[d(n) = \frac{\log(n)}{\log(1/p)} \quad (n=1,2,...)\]
\[T_i(n) = \frac{1}{\binom{n}{d(n)}}\binom{d(n)}{i}\binom{n-d(n)}{d(n)-i}p^{-\frac{1}{2}i(i-1)} \quad (n = 1,2,...; i = 1,2,...d(n))\]
Note that $d(n) = O(\log(n))$ and $d(n) = \omega(1)$.
We have from \Cref{equation:gm-lemma}:
\[\mathbb{P}(K_n \leq d(n)) \leq 1- \frac{\mathbb{E}(N^k_n)}{\mathbb{E}(N^k_n|A)}\]
Then from \Cref{equation:expected-clique}, and \Cref{equation:expected-clique-conditional}:
\[\mathbb{P}(K_n \leq d(n))\leq 1-(\sum_{i=0}^{d(n)}T_i(n))^{-1} \leq \sum_{i=0}^{d(n)}T_i(n) - 1\]
The second inequality comes from the fact that:
\[\frac{\mathbb{E}(N^k_n)}{\mathbb{E}(N^k_n|A)} \leq 1\]

We can now analyze the complexity of this probability. We have:
\[T_0(n) = (1-\frac{d}{n})(1-\frac{d}{n-1})...(1-\frac{d}{n-d+1}) \leq 1\]
\[T_1(n) = T_0(n)\frac{d^2}{n-2d+1} = O(\frac{\log^2(n)}{n})\]
Then, we have clearly $T_0 + T_1 - 1 = O(\frac{\log^2(n)}{n})$. Next, to analyze $2 \leq j \leq d(n)$:
\[\frac{T_j(n)}{T_2(n)} = \frac{(n-2d+2)!}{(n-2d+j)!}(\frac{(d-2)!}{(d-j)!})^2\frac{2p^{1-\frac{1}{2}j(j-1)}}{j!} \leq (\frac{d^2}{n-2d}p^{-\frac{1}{2}(j+1)})^{j-2}\]
Using $j \leq d$ and $d = \frac{\log(n)}{\log(1/p)}$:
\[\frac{T_j(n)}{T_2(n)} \leq (\frac{d^2}{n-2d}(1/p)^{\frac{\log(n)}{2\log(1/p)}+\frac{1}{2}})^{j-2} \leq (\frac{d^2(n^{1/2}(1/p)^{1/2}}{n-2d})^{j-2} = o(1)\]
Now we find $T_2(n)$:
\[T_2(n) \leq p^{-2}\frac{(d-2)(d-1)}{2(n-2d+1)}T_1(n) = O(\frac{\log^4(n)}{n^2})o(n^{2\epsilon})\]
Plugging everything back, gives us:
\[P(K_n < d(n)) \leq T_0(n) + T_1(n) - 1 + d(n)T_2(n) = o(1)\]
Thus, with high probability, there exists a clique of size at least $d(n) = \omega(1)$ in $G(n, p)$.
\end{proof}
\begin{proof}[Proof of \Cref{prop:ER-complement-greater}]
First, we can look at $G' = G(\frac{n}{2}, p)$. We have that $1-p = \frac{\omega((\frac{n}{2})^\epsilon)}{(\frac{n}{2})}$, so we can use \Cref{lemma:large-clique} to claim that there exists a clique of size $\omega(1)$ in the complement of $G'$ with high probability. That implies $G'$ has an independent set, $S$, of that size. Now, we generate the rest of $G(n, p)$ by adding the rest of the $\frac{n}{2}$ vertices and corresponding edges. It is easy to see after picking a random new vertex $v$,  with high probability it connects with at least $\omega(1)$ of the independent set $S$ in $G'$. It is also clear to see that after removing that independent set from $G$ that $G$ remains connected since none of vertices generated in the second half were removed. Thus, all conditions of \Cref{theorem:ordering_exists} are satisfied.
\end{proof}


\GridGraphLearning*

\begin{proof}
Let $G$ be a grid graph of the lattice from $(1,1)$ to $(\sqrt{n},\sqrt{n})$. The following is a decision ordering that achieves asymptotic truth learning. See also \Cref{fig:grid-learning}.
\begin{enumerate}
    \item All node sequences of the following form in order of increasing $a$, starting from $1$ with maximum value in the order of $O(\log \log n)$:
    \[(m-2^{a-1}, a),..., (m - 1, a), (m+2^{a-1}, a),..., (m + 1, a), (m, a)\]
    for all non-negative integer tuples $(a, h)$ satisfying:
    \[k = \max\{x: 2^{x + 1 + a} \leq \floor{\log(n)}\} \quad \quad \quad 1 \leq h \leq 2^{k} \quad \quad \quad m = (2h-1)2^{a}\]
    Notice for a fixed value of $a$, $k$ is decided but there are multiple values of $h$ and $m$. Also if we fix both $a$ and $h$, $m$ is decided. One can verify\footnote{The vertices appearing in the sequence for tuple $(a, h)$ have $y$-coordinate of $a$ and $x$-coordinates within the range of $(4h-3)2^{a-1}$ and $(4h-1)2^{a-1}$.} that one vertex cannot appear in two sequences for different tuples $(a, h)\neq (a', h')$. All vertices used in this step have $x$-coordinate at most $\log n$ and $y$-coordinate at most $\log \log n$.
    We just need all the sequences follow an increasing order of $a$. With the same value of $a$, the sequences with different values of $h$ can be ordered arbitrarily. 
    \item A traversal of $G - (A - \{\alpha\})$ starting at $\alpha$, where $A$ is the set of all nodes from the first step and $\alpha$ is the last node in the first step. 
    \item The rest of the nodes in $G$ in any order.
\end{enumerate}
We will now show that asymptotic truth learning is achieved by aggregation in the first step, and propagation in the second step.

\begin{figure}[htbp]
\centering
\includegraphics[width=.85\linewidth]{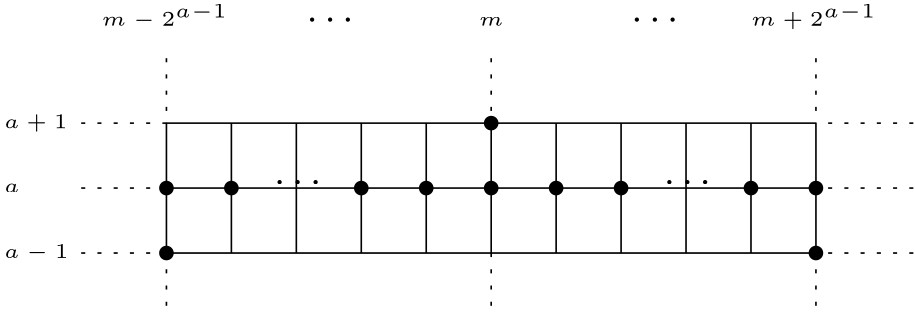}
\caption{Here is an example of a single vertex sequence corresponding to one tuple $(a, h)$ from step 1. The $a-1$ row vertex sequences end at the marked nodes. Then, the vertex sequence in row $a$ starts from the far left and goes to the right one by one until the node $(m-1, a)$. Then, it starts from the far right (node $(m+2^{a-1}, a)$) and walks to the left one by one to the node $(m+1, a)$. Lastly, $(m, a)$ gets to aggregate the signals from $(m-1, a)$ and $(m+1, a)$, which is eventually picked up by vertex sequence in row $a+1$. For a fixed $a$, there are $2^k$ such vertex sequences.}
\label{fig:grid-learning}
\end{figure}

First, we show that node $\alpha$ achieves a learning rate of $1 - o(1)$ in the first step. 
As $a$ increases, the learning rate of nodes in row $a$ increases, and we will show that the learning rate approaches 1.
Consider the sequence for a tuple $(a, h)$ (see \Cref{fig:grid-learning}), by \Cref{proposition:propagate}, we see that $\ell(m-1, a) \geq \ell(m-2^{a-1}, a)$ and $\ell(m + 1, a) \geq \ell(m+2^{a-1}, a)$.  Next, the vertex $(m, a)$ can aggregate signals from both $\ell(m-1, a)$ and $\ell(m-1, a)$. 
Further, since the sequence with $a$ comes after the sequence with $a-1$, we have $\ell(m-2^{a-1}, a)\geq \ell(m-2^{a-1}, a-1)$ and $\ell(m+2^{a-1}, a)\geq \ell(m+2^{a-1}, a-1)$. 
Now we consider the abstraction of the ordering in step 1 as a fully balanced binary tree $T$ with vertices $(m, a)$ mapping to the parent of the vertices corresponding to $(m-2^{a-1}, a-1)$ and $(m+2^{a-1}, a-1)$. Here $a$ corresponds to the levels of the binary tree with the bottom level as $a=1$.  
We aggregate information bottom up with $a$ increasing from $1$ until $\Theta(\log(\floor{\log(n)}))$. Since the learning rate of the root of a fully balanced binary tree with a total of $g$ levels, using a bottom up ordering, is $1-\Theta(e^{-g})$, we have the error rate of $\alpha$ is $O(\frac{1}{\floor{\log(n)}})$, which means its learning rate is $1 - o(1)$.

Next, we have from \Cref{proposition:propagate} that $\alpha$ passes its learning rate to the rest of the graph that can be traversed. This is at least $G - \floor{\log(n)} \times \floor{\log(n)}$, which is $1 - o(1)$ proportion of all of the nodes. Thus, the network learning rate is at least:
\[(1- o(1))^2 = 1 - o(1).\]
This finishes the proof.
\end{proof}

\section{Simulations}\label{sec:simulation}
We will now provide additional simulation results with {\ER} model, the PA model and low degree networks.


\subsection{{\ER} Model}
For {\ER} model, we are interested in understanding the influence of edge probability $p$ on the learning rate of the network. As $p$ increases, the overall connectivity of the network increases, and the network undergoes a transition from a sparsely connected to a more densely connected regime. 


\begin{table*}[h]
\caption{Erdos-Renyi Two Neighbors vs Two Neighbors + High Value}
\label{tab:1}
\begin{small}
\begin{minipage}{\columnwidth}
\centering
\begin{tabular}{lccccc} \toprule
\multicolumn{6}{l}{Simulation results $q=0.7$, $n=1000$, iterations$=300$} \\ \cmidrule(l){1-6}
\multicolumn{1}{l}{\multirow{2}{*}{Ordering}} & \multicolumn{5}{c}{Average learning rate} \\ \cmidrule(l){2-6}
\multicolumn{1}{l}{} & $p=0.005$ & $p=0.01$ & $p=0.03$ & $p=0.05$ & $p=0.07$ \\ \midrule
\multicolumn{1}{l}{Random} & $0.8017$ & $0.8820$ & $0.9612$ & $0.9753$ & $0.9757$ \\
\multicolumn{1}{l}{Two Neighbors} & $0.9079$ & $0.9466$ & $0.9714$ & $0.9747$ & $0.9816$ \\
\multicolumn{1}{l}{Two Neighbors + High Value} & $0.7881$ & $0.9524$ & $0.9734$ & $0.9774$ & $0.9804$ \\ \bottomrule
\end{tabular}
\end{minipage}
\end{small}
\end{table*}

\begin{table*}[h]
\caption{Erdos-Renyi Two Neighbors vs Two Neighbors + High Value}
\label{tab:2}
\begin{small}
\begin{minipage}{\columnwidth}
\centering
\begin{tabular}{lcccc} \toprule
\multicolumn{5}{l}{Simulation results $q=0.7$, $n=1000$, iterations$=300$} \\ \cmidrule(l){1-5}
\multicolumn{1}{l}{\multirow{2}{*}{Ordering}} & \multicolumn{4}{c}{Average learning rate} \\ \cmidrule(l){2-5}
\multicolumn{1}{l}{} & $p=0.1$ & $p=0.2$ & $p=0.3$ & $p=0.7$ \\ \midrule
\multicolumn{1}{l}{Random} & $0.9783$ & $0.9542$ & $0.8972$ & $0.8626$\\
\multicolumn{1}{l}{Two Neighbors} & $0.9761$ & $0.9636$ & $0.9403$ & $0.8622$\\
\multicolumn{1}{l}{Two Neighbors + High Value} & $0.9691$ & $0.9359$ & $0.9232$ & $0.8853$\\ \bottomrule
\end{tabular}
\end{minipage}
\end{small}
\end{table*}
In Table \ref{tab:1} and \ref{tab:2}, we compare the random, \emph{two neighbors}, and \emph{two neighbors + high value} ordering. Here, we simulate over edge probabilities between $p=0.01$ and $p=0.7$. For a small edge probability $p=0.005$, we see relatively similar overall learning rates for all orderings, suggesting that successful aggregation cannot occur in sparsely connected regimes. In general, the \emph{two neighbors + high value} ordering performs better than the \emph{two neighbors} as it prioritizes high-value information. However, as $p$ increases to above $0.05$, we see relatively interchangeable learning rates for all of the different orderings. In more densely connected regimes as seen in Table \ref{tab:2}, both the \emph{two neighbors} and \emph{two neighbors + high value} orderings become more susceptible to negative information cascades. The edge probability $p=0.7$ best illustrates this point as all learning rates fall below $0.9$.

\begin{figure}[h]
\centering
\includegraphics[scale = 0.4]{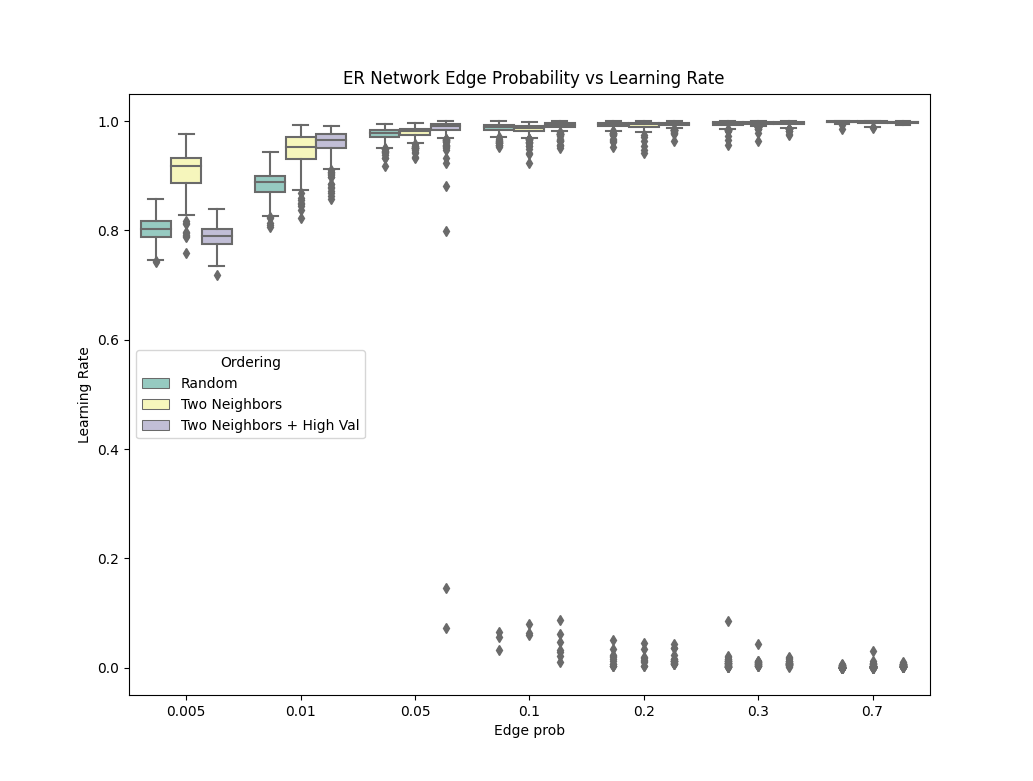}
\vspace*{-4mm}
\caption{\small The simulation results for an ER network with $q=0.7$ and $n=1000$, over $300$ iterations. The median learning rate over 300 iterations increases as the edge probability increases. However, the number of negative information cascades also increases, which is reflected in the polarity of learning as well as the lower mean learning rates.}
\label{fig:ER Model}
\end{figure}

\subsection{Preferential Attachment Model}
In preferential attachment (PA) graphs, a key observation is that there will be a few extremely high-degree nodes. This allows us to utilize the \emph{two neighbors + high value} ordering with a high independent set size $m$ which will allow us to minimize the chance of a negative information cascade. In our simulations, we utilize a PA model with 
a seed clique of $k = 5$, and we attach each new node to $k$ preexisting nodes. Additionally, we will consider three orderings: 1.) a random ordering, 2.) the \emph{arrival} ordering of the PA model, and 3.) and the \emph{Two Neighbor + High Value} ordering with parameter $m = 30$. 
Over the three orderings, we observe the behavior network learning rate as $n$ increases. In Table \ref{tab:3}, we observe that the \emph{two neighbor + high value} ordering consistently performs the best, with a learning rate increasing with $n$. The arrival ordering is extremely susceptible to information cascades. 
This causes high variation in learning rate we see in Table \ref{tab:3}. 
\begin{table*}[h]
\caption{Preferential Attachment with Two Neighbors + High Value}
\label{tab:3}
\begin{small}
\begin{minipage}{\columnwidth}
\centering
\begin{tabular}{lcccccc} \toprule
\multicolumn{7}{l}{Simulation results $q=0.7$, $k = 5$, iterations $=300$} \\ \cmidrule(l){1-7}
\multicolumn{1}{l}{\multirow{2}{*}{Ordering}} & \multicolumn{6}{c}{Average learning rate} \\ \cmidrule(l){2-7}
\multicolumn{1}{l}{} & $n = 500$ & $n=700$ & $n=900$ & $n=1100$ & $n=1300$ & $n = 1500$\\ \midrule
\multicolumn{1}{l}{Random} & $0.8813$ & $0.8800$ & $0.8791$ & $0.8821$ & $0.8816$ & $0.8799$\\
\multicolumn{1}{l}{Arrival} & $0.8845$ & $0.8257$ & $0.8606$ & $0.8254$ & $0.8935$ & $0.8312$\\
\multicolumn{1}{l}{High Value} & $0.9577$ & $0.9689$ & $0.9758$ & $0.9832$ & $0.9842$ & $0.9851$ \\ 
\bottomrule
\end{tabular}
\end{minipage}
\end{small}
\end{table*}

\subsection{Low Degree Structures}
We define the $n \times n$ grid $G(n,n)$ as the regular grid of nodes in $\mathbb{R}^2$, where the node set consists of label pairs $(i, j) \in \{1,..., n\}^2$ and an edge set $\{(i,j)(i',j'):|i-i'|+|j-j'|=1\}$. The \textit{boundary} of $G(n, n)$ is the outermost layer of the grid and the interior nodes are all of the rest. We induce a spiral ordering on the nodes. For a spiral order, we begin at the top left corner node of the grid and follow a clockwise spiral until we account for all boundary nodes. We then repeat the same process for the next layer down, starting at the top left corner node and traversing clockwise, like peeling layers of an onion. 

Both the binary butterfly network as seen in \Cref{fig:butterfly} and the grid network are examples of \emph{low degree graphs}, where the maximum degree is bounded by a constant. Unlike {\ER} or PA models, they don't have high-degree nodes, or hubs, that make for easy information aggregation and propagation. However, under specific orderings that exploit the local structure of the network, information can be incrementally aggregated. As seen in Table \ref{tab:4} and \ref{tab:5}, the performance of both networks improves significantly as the $n$ increases, in contrast with the random ordering that appears to have a learning rate asymptotically bounded away from $1$. 
\begin{table*}[htbp]
\caption{Butterfly Network}
\label{tab:4}
\begin{small}
\begin{minipage}{\columnwidth}
\centering
\begin{tabular}{lcccccc} \toprule
\multicolumn{7}{l}{Simulation results $q=0.7$, iterations$=300$} \\ \cmidrule(l){1-7}
\multicolumn{1}{l}{\multirow{2}{*}{Ordering}} & \multicolumn{6}{c}{Average learning rate} \\ \cmidrule(l){2-7}
\multicolumn{1}{l}{} & $n=80$ & $n=192$ & $n=448$ & $n=1024$ & $n=2304$ & $n=5120$ \\ \midrule
\multicolumn{1}{l}{Random} & $0.7602$ & $0.7600$ & $0.7634$ & $0.7643$ & $0.7640$ & $0.7663$\\
\multicolumn{1}{l}{Bottom-Up} & $0.8426$ & $0.8513$ & $0.8717$ & $0.8859$ & $0.9000$ & $0.9082$\\ \bottomrule
\end{tabular}
\end{minipage}
\end{small}
\end{table*}

\begin{table}[htbp]
\caption{$n \times n$ Grid Network}
\label{tab:5}
\begin{small}
\begin{minipage}{\columnwidth}
\centering
\begin{tabular}{lccccc} \toprule
\multicolumn{6}{l}{Simulation results $q=0.7$, iterations$=300$} \\ \cmidrule(l){1-6}
\multicolumn{1}{l}{\multirow{2}{*}{Ordering}} & \multicolumn{5}{c}{Average learning rate} \\ \cmidrule(l){2-6}
\multicolumn{1}{l}{} & $n=20$ & $n=30$ & $n=40$ & $n=50$ & $n=60$ \\ \midrule
\multicolumn{1}{l}{Random} & $0.7714$ & $0.7718$ & $0.7721$ & $0.7727$ & $0.77305$ \\
\multicolumn{1}{l}{Spiral} & $0.8922$ & $0.9168$ & $0.9366$ & $0.9501$ & $0.9575$ \\ \bottomrule
\end{tabular}
\end{minipage}
\end{small}
\end{table}

\begin{figure}[h]
\centering
\includegraphics[scale = 0.6]{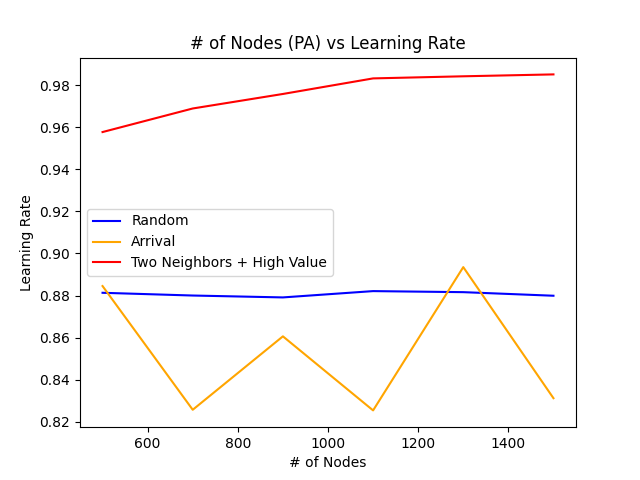}
\caption{The simulation results for a PA model with $q=0.7$ and $k=5$ over $300$ iterations. The two neighbors + high value ordering performs much better than random and arrival orderings.}
\label{fig:PA Model}
\end{figure}

\begin{figure}[h]
\centering
\begin{minipage}{.5\textwidth}
  \centering
  \includegraphics[scale = 0.45]{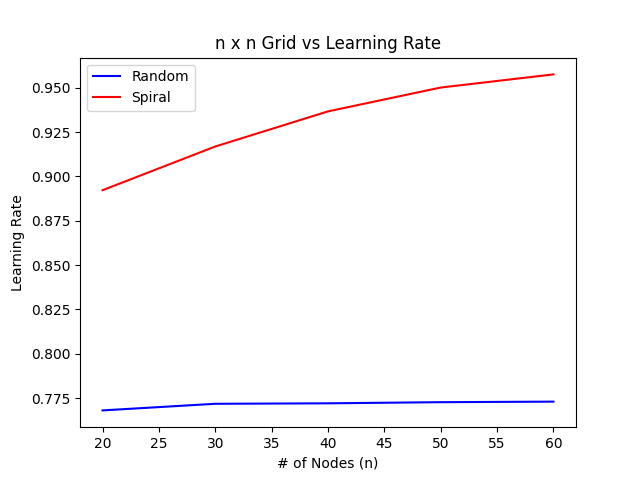}
  \label{fig:grid}
\end{minipage}%
\begin{minipage}{.5\textwidth}
  \centering
  \includegraphics[scale = 0.45]{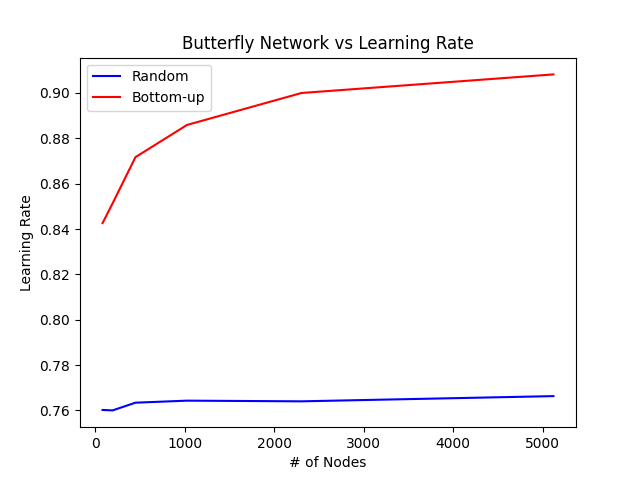}
  \label{fig:butterfly network}
\end{minipage}
\captionof{figure}{The simulation results for low-degree networks with $q=0.7$ over $300$ iterations. We observe the behavior of low-degree networks, the $n \times n$ grid network (left) and butterfly network (right), as the number of nodes increases under specific orderings. In both cases, we see the feasibility of asymptotic learning.}
\end{figure}

\end{document}